\newtheorem{remark}{Remark}[section]
\newtheorem{assumption}{Assumption}[section]
\newcommand{\system}{\textit{Odin}\xspace}
\begin{document}
%\pagenumbering{gobble}

%%
%% The "title" command has an optional parameter,
%% allowing the author to define a "short title" to be used in page headers.
\title{Odin: Effective End-to-End SLA Decomposition for 5G/6G Network Slicing via Online Learning}

\author{
Duo Cheng\textsuperscript{1}, 
Ramanujan K Sheshadri\textsuperscript{2},
Ahan Kak\textsuperscript{2},
Nakjung Choi\textsuperscript{2},
Xingyu Zhou\textsuperscript{3},
Bo Ji\textsuperscript{1}
}
%\thanks{Part of DC's work was done during the internship at Nokia Bell Labs.}
\affiliation{
\textsuperscript{1} Virginia Tech  \quad
\textsuperscript{2} Nokia Bell Labs \quad
\textsuperscript{3} Wayne State University \quad \\
\country{\{duocheng, boji\}@vt.edu\quad \{ram.sheshadri, ahan.kak, nakjung.choi\}@nokia-bell-labs.com  \quad
xingyu.zhou@wayne.edu}
}

\settopmatter{printacmref=false} % Removes citation information below abstract
\renewcommand\footnotetextcopyrightpermission[1]{} % removes footnote with conference information in first column
\pagestyle{plain}
\settopmatter{printfolios=true}

%%
%% The abstract is a short summary of the work to be presented in the
%% article.
\begin{abstract}
Network slicing plays a crucial role in realizing 5G/6G advances, enabling diverse Service Level Agreement (SLA)
requirements related to latency, throughput, and reliability. Since network slices are deployed
end-to-end (E2E), across multiple domains including access, transport, and core networks, it is essential to
efficiently decompose an E2E SLA into domain-level targets, so that each domain can provision adequate resources for the slice. However, decomposing SLAs is highly challenging due to the heterogeneity of domains, dynamic network conditions, and the fact that the SLA orchestrator
is oblivious to the domain's resource optimization. In this work, we
propose \system, a Bayesian Optimization-based solution that leverages each domain's online feedback for
\textit{provably-efficient} SLA decomposition. Through theoretical analyses and rigorous evaluations, we demonstrate that \system's E2E orchestrator can achieve up to 45\% performance improvement in SLA satisfaction when compared with baseline solutions whilst reducing overall resource costs even in the presence of noisy feedback from the individual domains.
\end{abstract}
 % Removes footnote with conference information

\begin{comment}
\ccsdesc[500]{Networks~Network algorithms}
\ccsdesc[500]{Theory of computation~Theory and algorithms for application domains~Machine learning theory~Online learning theory}

\keywords{Network Slicing, Service Level Agreement, Bayesian Optimization}
\end{comment}

\maketitle

\section{Introduction}
\label{sec:intro}
The advent of 5G/6G has enabled diverse services with stringent Service Level Agreement (SLA) requirements. SLAs, defined by metrics such as latency, throughput, and reliability, set end-to-end (E2E) performance expectations across core, transport, and access (e.g., Radio Access Network, RAN) domains. To meet these needs, network slicing creates multiple virtual networks over shared infrastructure, each tailored to specific SLAs, ensuring efficient implementation network-wide.

\noindent \emph{\bf End-to-End SLA decomposition:}
E2E SLA decomposition~\cite{de2020decomposing,abbasi2013online,iannelli2020applying,hsu2024online} is a crucial step in realizing SLA demands, 
as it enables partitioning E2E service requirements into domain-specific targets, facilitating effective resource allocation. 
However, SLA decomposition is challenging as the core, transport, and RAN domains have distinct resource needs, technologies, and operational characteristics. 
Moreover, as networks evolve, domain orchestrators 
continuously adjust their resource allocations, potentially leading to performance variations that may hinder meeting SLA targets.

Fig.~\ref{fig:motiv} illustrates how two different decompositions of a slice's SLA, which requires an E2E latency at most 100~ms, can yield very different outcomes. In the first, 
the RAN with 20 Physical Resource Blocks (PRBs) can achieve a latency of 53~ms, well above the decomposition target of 10~ms. While the transport and
core networks can easily meet their respective latency targets of 60~ms and 30~ms, 
their resources (link bandwidth of 1~Gbps and CPU capacity of 5~GHz, respectively) remain underutilized. This suboptimal decomposition
results in an E2E latency 
of 143~ms (SLA satisfaction of 70\%). In contrast, 
the second decomposition, which sets smaller but feasible targets for the transport and core networks (38~ms and 2~ms, respectively) compared to the RAN (60~ms), is able to 
achieve 100\% SLA satisfaction through better allocation across domains.

\begin{figure}[!t]
	\centering
\includegraphics[width=1.02\linewidth]{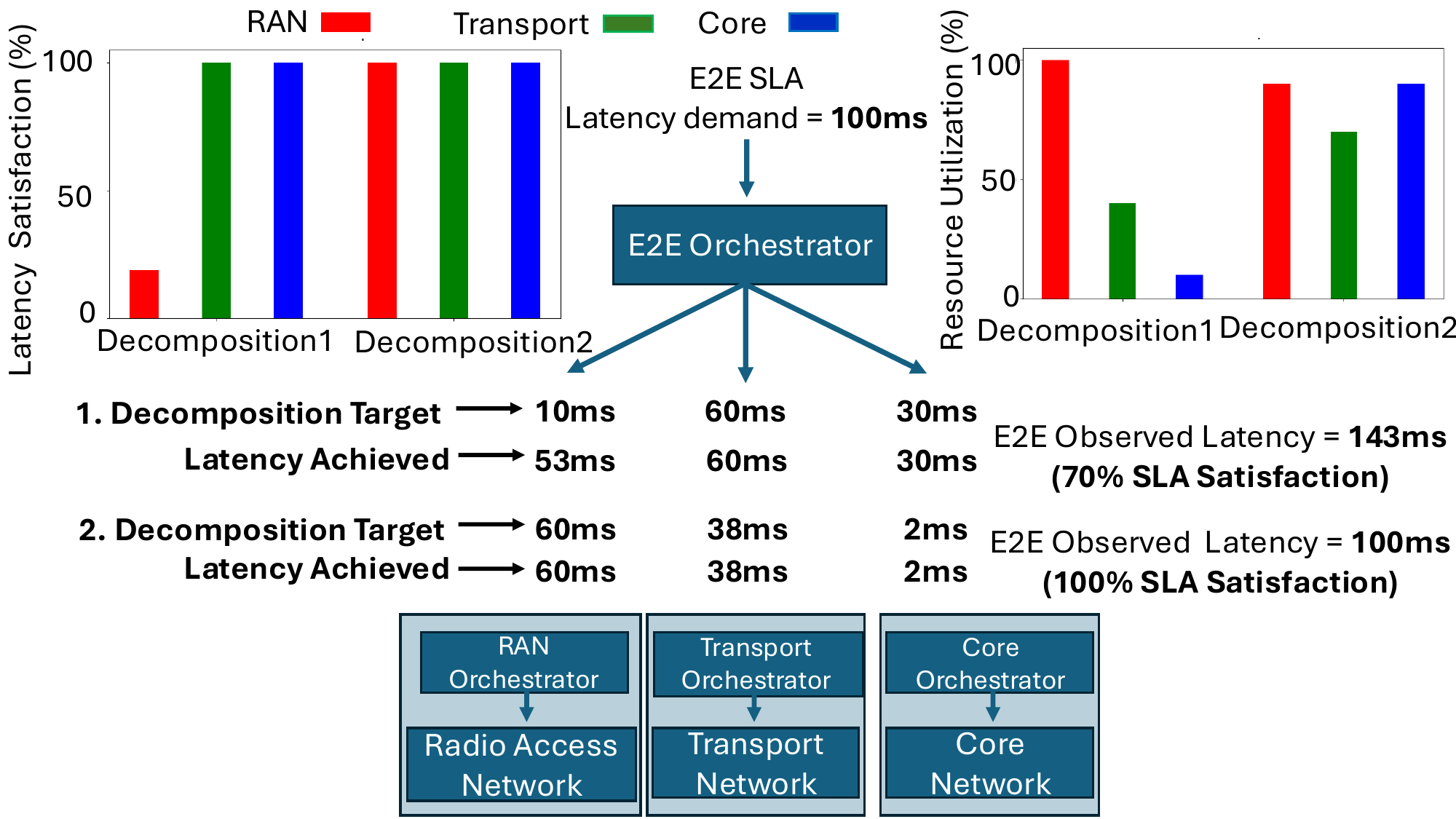}
	\caption{Impact of SLA decomposition on E2E performance}
    
    \label{fig:motiv}
\end{figure}

Existing work on SLA decomposition~\cite{iannelli2020applying,de2020decomposing,hsu2023sla} has focused on an offline setting, training models on precollected data to approximate domain behaviors. Such models risk failure due to the distribution discrepancy between offline training and online deployment. In contrast, 
our work leverages a novel Bayesian Optimization (BO)-based algorithm to address this issue in an online setting, where the algorithm continuously learns how to make better decomposition decisions through low-overhead
interactions with the network, without prior knowledge of domain-specific conditions or control policies, which, however, faces the following challenges.

\subsection{Challenges}
\textbf{Unknown network conditions:}
The primary challenge for SLA decomposition stems from \emph{dynamic} network conditions. As network conditions vary, so does the domain's performance.
Decomposing an E2E SLA without considering these fluctuations can be suboptimal. Since domain-specific conditions and resource optimizations are independent of and unknown to the E2E orchestrator, it must continuously learn and adapt SLA decompositions across domains. Achieving this efficiently at scale—across thousands of slices and flows—poses a significant challenge due to potential overhead.

\noindent \textbf{Modeling domain performance:} The orchestrator responsible for resource optimization within each domain
can potentially use different combinations of resources to achieve the required performance. E.g., the RAN
orchestrator can adjust a wide range of resources such as power control, handover thresholds, 
interference management, scheduling weights, antenna tilt, and PRBs, with each resource having varying 
degree of impact on the RAN performance. Resource choices may change over time due to operator preferences, vendor architectures, or monetary constraints. Thus, it is generally infeasible for E2E orchestrators to access or explicitly use domain-level control details, as any change from domain orchestration in this case can render the E2E orchestrator incompatible.

\subsection{Contributions}
To address the above challenges, we propose \system,\footnote{Just as Odin from Norse mythology oversees and connects the different realms, the E2E orchestrator manages E2E SLAs by coordinating all the domains from above.}
a novel Bayesian Optimization-based solution as a provably-efficient algorithm for
SLA decomposition. \system is 
\emph{agnostic} to the domains' own conditions and control contexts. As we show later in Sec.~\ref{sec:eval}, \system provides  
better empirical performance than other baselines. To that end, we discuss below, the main contributions of this work.

\noindent\textbf{Theoretical formulation with provably-efficient algorithm design:}
To our knowledge, this is the first work to rigorously formulate SLA decomposition as an (constrained) online learning problem. With well-defined objectives, learning algorithm aims to jointly optimize the total resource cost and E2E SLA violation with a delicate balance in between. To handle unknown network conditions and noisy feedback, we adopt Bayesian Optimization (BO) as the learning algorithm and prove its theoretical guarantee. Rather than simply learning the E2E behavior of entire network, \system is tailored to leverage individual domain-level performance, resulting in a more effective and practical solution.

\noindent \textbf{3GPP- and UNEXT-compliant design: }
\system's system design adheres to 3GPP specifications and enables seamless integration into
existing 5G deployments. Additionally, \system's design embodies the emerging UNEXT~\cite{unext} architecture 
in which various heterogeneous modules of the system can autonomously manage themselves while effortlessly interacting with one another via standard interfaces. Furthermore, the optimizations discussed in this paper can be implemented as part of the E2E orchestrator in 
Network Slice Monitoring Function and can leverage standardized REST APIs 
to communicate between the E2E orchestrator and domain orchestrators.

\noindent {\bf{Testbed evaluation and simulation: }}
To evaluate \system, we conduct a detailed trace-based evaluation followed by a large-scale simulation. Trace data are generated using actual core, transport, and RAN testbeds. To that end, we deploy a core network testbed developed using Open5GS~\cite{open5gs}, 
a RAN testbed developed using the OpenAirInterface software stack~\cite{oai,kak2024hexranprogrammableapproachopen}, and a transport testbed using Mininet emulator~\cite{mininet}, which enables deploying virtual networks running on real kernel supporting real-world application traffics over various topologies. As shown in Sec.~\ref{sec:eval}, \system achieves up to 45\% higher SLA satisfaction than various baselines, while significantly reducing overall resource usage across domains.
\section{Background and Related Work}
\label{sec:motivate}
\subsection{Black-Box Optimization}
\label{sec:bbo_bg}
Black-box optimization~\cite{schumer1968adaptive,back1993overview,bergstra2012random,fortin2012deap} (a.k.a. derivative-free optimization) refers to problems where the objective function is unknown or lacks a closed form, and is evaluated via experiments, simulations, or real-world performance.
Without gradient information, Bayesian Optimization—a popular black-box method—uses a surrogate model (e.g., Gaussian Process) to approximate the function with uncertainty quantification, iteratively refining it from observed outputs to improve accuracy.
The goal is to efficiently navigate the search space and to reach the optimal solution. 
Compared to Reinforcement Learning (RL)~\cite{sutton1999reinforcement}, BO is lightweight, requires fewer evaluations to optimize a specific objective, and is more computationally efficient, while RL often needs much more interactions and resources to learn a good solution. In this work, \system treats the domains as black boxes whose performance is learned and modeled by the E2E orchestrator to determine the best SLA decomposition.

\subsection{Related Work}
\label{sec:related}
\textbf{Bayesian Optimization:} Numerous studies have explored the theory of BO. The unconstrained optimization has been studied in \cite{srinivas2010gaussian,chowdhury2017kernelized,scarlett2017lower}. The constrained 
case has been explored in \cite{zhou2022kernelized,deng2022interference}, where a primal-dual approach is proposed with sub-linear guarantees on both regret and ``soft''\footnote{``Soft'' constraint means that it allows a negative violation to cancel out a positive one 
in another round, while ``hard'' constraint does not allow.} constraint violation. \cite{pmlr-v202-xu23h} strengthens this to ``hard'' constraint satisfaction using a different method, which inspires our design. We focus on a specific use case, namely, SLA decomposition, which we tailor the algorithm in \cite{pmlr-v202-xu23h} to.

\noindent \textbf{SLA decomposition:}
There are several previous works that study SLA decomposition, but differ from this work in terms of the problem setup, and/or algorithmic framework. In summary, past solutions are offline and largely assume perfect prior knowledge of network conditions or rely on offline model training, and both of them are unrealistic assumptions for practical deployments.

In \cite{de2020decomposing,hsu2023sla}, the authors formulate a constrained optimization problem where the goal is to find a SLA decomposition that satisfies the E2E SLA, while the probability of all domains accepting the decomposed SLAs is maximized. The authors develop an offline solution in
which they train a surrogate model that is adaptive to domain conditions change. In contrast, our
work employs an online learning algorithm in which the E2E orchestrator determines an SLA decomposition through constant interactions with the domains, re-calibrating the surrogate model to make more accurately model the underlying network conditions. In \cite{hsu2024online}, the formulation in \cite{de2020decomposing,hsu2023sla} is extended to an online scenario, with the new algorithm using a memory buffer to store recent samples for online model update. Besides a different objective from ours, their results are still heuristic without theoretical guarantees.

The authors in \cite{esmat2024sla} target SLA decomposition, but the decision space for the E2E orchestrator includes slice internal configuration, SLA decomposition, resource allocation, etc., while in our formulation, the orchestrator is unaware of (and need not know) the domain orchestration. Since the E2E orchestrator of \system remains \emph{agnostic} to domain operations, it enjoys low communication overhead and is more practical clearly delineating the tasks between the Network Slice Monitoring Function (E2E orchestrator) and Network Slice Subnet Management Functions (domain orchestrator).

The work \cite{hsu2025rails} approaches SLA decomposition by solving a constrained optimization problem with an objective
of maximizing the decomposition-acceptance likelihood, under the constraint that the sum of domain-level latencies falls below the E2E target. However, the decision space includes categorical variables, making the entire optimization problem an NP-hard Mixed-Integer non-linear programming problem, which requires heuristic solutions.
\section{System Design and Problem Formulation
}
\label{sec:formulation}

\subsection{Overview}
The E2E control flow of \system is illustrated in Fig.~\ref{fig:design}. The UNEXT-native E2E orchestrator \cite{unext} implements a black-box optimizer responsible for SLA decomposition and 
an SLA monitor that tracks relevant KPIs of each slice. If the SLA for a particular slice does not meet the minimum requirement, the monitor 
triggers the E2E orchestrator to perform a more feasible decomposition.
In 5G networks, \system's E2E orchestrator and SLA monitor can be implemented as Network Slice Management Functions 
while the domain orchestrators operate as Network Slice Subnet Management Functions (NSSMFs), 
as in Fig.~\ref{fig:design}. The NSMF coordinates, manages, and orchestrates E2E network slice instances across domains, 
while NSSMF handles lifecycle management of network slices within each domain. Fully compliant with the 3GPP architecture, \system
can be readily implemented in existing real-world 5G networks.

\begin{figure}[!t]
        \centering
\includegraphics[width=1.0\linewidth, height=1.8in]{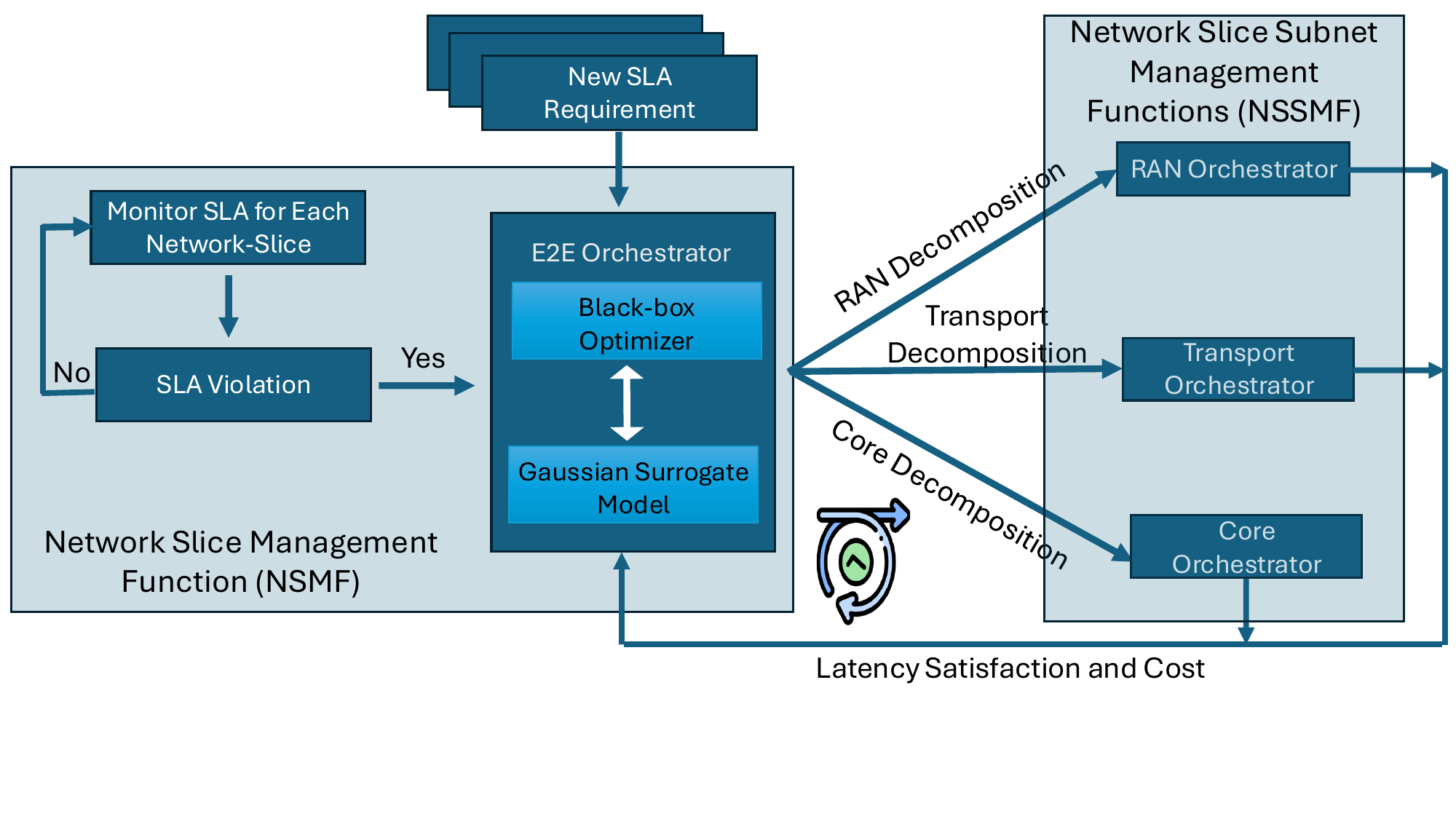}
        \caption{\system's 3GPP-compliant and UNEXT-native end-to-end system design
        }\label{fig:design}
\end{figure}

\subsection{Problem Formulation}
\label{sec:problem_form}
As described above, \system enables the E2E orchestrator, in coordination with domain orchestrators, to decompose an E2E SLA so as to minimize the total resource cost while meeting E2E SLA targets. For a better exposition, we explain \system's
algorithms with latency as the SLA of interest, while our formulation and learning algorithms in general are applicable to broader SLA metrics.

Let \( D \) represent the number of domains in an E2E network, and let \( L \) denote the E2E SLA latency target. The SLA decomposition is given by  
\( x := (x^1, \dots, x^D) \in (\mathbb{R}_+)^D \), where each \( x^i \) represents the latency allocated to domain \( i \) and $\mathbb{R}_+$ denotes all positive real numbers. 
The actual domain-level latency values are determined by functions \( f^1(x^1), \dots, f^D(x^D) \), while the corresponding costs 
associated with these latencies are given by \( g^1(x^1), \dots, g^D(x^D) \). 

The cost of meeting a decomposed latency target in a domain is determined by its orchestrator. 
We assume a robust, low-overhead channel between NSMF and NSSMF, allowing the E2E orchestrator to send latency targets to domains.
Upon receiving a latency target, each domain orchestrator implements its own resources-optimizations (unknown to  
the E2E orchestrator) while considering the stochastic conditions of the domain. It then reports the cost required to achieve 
the specified latency target back to the E2E orchestrator. If a domain cannot meet the given 
latency requirement, its orchestrator determines the best achievable latency along with the corresponding cost and send them to the E2E orchestrator.

The optimization problem (for the E2E orchestrator) is formulated as below in Eq.~\eqref{eqn:offline_opt}
in which the domain latency functions $f^1,\dots,f^D$ and cost functions $g^1,\dots,g^D$ are all unknown to the E2E orchestrator. 
The objective is to find an appropriate decomposition such that the E2E latency requirement is satisfied while the total cost (i.e., the resources consumed) is minimized. By definition, the E2E latency is the sum of all 
domain-level latencies, and the total cost is also the sum of all (normalized) domain-level costs. 

\begin{equation}\label{eqn:offline_opt}
\begin{aligned}
 \min_{x = (x^1,\dots,x^D)\in (\mathbb{R}_+)^D}~&\sum_{i=1}^D g^i(x^i),\\
\text{s.t.} ~&\sum_{i=1}^D f^i(x^i)\leq L.
\end{aligned}
\end{equation}

\begin{remark}
    The constraint in problem~\eqref{eqn:offline_opt} can be made more generic to incorporate multiple different SLA constraints of form 
   
    $$\mathrm{Comp}_j(f_j^1(x^1),\dots,f_j^D(x^D))\leq L_j,\forall j=1,\dots, N_{\mathrm{SLA}},$$
    where $\mathrm{Comp}_j(\cdot,\dots,\cdot)$ is the composition rule of the $j$-th SLA type, $L_j$ is the corresponding SLA target, and $N_{\mathrm{SLA}}$ is the total number of different SLAs. E.g., when the $j$-th SLA is E2E transmission failure probability, then $f_j^i(\cdot)$ denotes the transmission failure probability at domain $j$, and the composition rule becomes    $$\mathrm{Comp}_j(f_j^1(x^1),\dots,f_j^D(x^D))=1-\prod_{i=1}^D(1-f_j^i(x^i)).$$ 
\end{remark}

However, to optimize the above formulation, the E2E orchestrator must first learn the unknown functions representing latencies 
and associated costs across domains. Effectively integrating this learning process into the overall problem formulation is crucial for achieving an optimal solution.
Consequently, in the following section, we define an online optimization problem that captures the efficiency of the entire learning process.

Before introducing the detailed formulation, to make the online learning problem nontrivial and well-defined, we pose two common assumptions.
Since we are studying a constrained problem, it is necessary to assume that the problem is feasible.
\begin{assumption}
    Problem~\eqref{eqn:offline_opt} is feasible, i.e., there exists some $x= (x^1, \dots, x^D)\in (\mathbb{R}_+)^D$ such that $\sum_{i=1}^D f^i(x^i)\leq L$.
\end{assumption}
Moreover, in general, it is impossible to learn ``arbitrary'' functions efficiently. The assumption below restricts the functions of interest only to those ``well-behaved'' ones.
\begin{assumption}
It is assumed that $f^i\in \mathcal{H}_{k^{i,f}}, g^i\in \mathcal{H}_{k^{i,g}}$, where $k^{i,f}$ (respectively $k^{i,g}$) is some symmetric, positive semi-definite 
kernel function, and $\mathcal{H}_{k^{i,f}}$ (respectively $\mathcal{H}_{k^{i,G}}$) is the corresponding reproducing kernel Hilbert space (RKHS). Moreover, we assume $\lVert f^i \rVert_{k^{i,f}}\leq B^{i,f}$ and $\lVert g^i \rVert_{k^{i,g}}\leq B^{i,g}$, where $\lVert \cdot \rVert_{k^{i,f}}$ and $\lVert \cdot \rVert_{k^{i,g}}$ are the respective norms induced 
by the inner product in the corresponding RKHS.
\end{assumption}

\begin{remark}
    \system's design is specific to Assumption~3.1 holding true. In reality, if a slice is found infeasible prior to provisioning, the operator simply rejects it. In  case Assumption~3.2 fails, \system will suffer \emph{inevitable} additional regret and constraint violation because functions are represented in some RKHS up to some approximation error.
\end{remark}

\subsection{Formulation for Online Learning}
Given \system's interactive design, the action
for the E2E orchestrator in
each round $t$ is to determine an SLA decomposition, i.e., domain latency targets $x_t:=(x^1_t,\dots,x^D_t)$. Each domain then uses its internal resource optimizations, accounting for stochastic network conditions, to meet its assigned target.
The feedback received by the E2E orchestrator from each domain $i=1,\dots,D$  is given by
$y^{i,f}_t := f^i(x^i_t) + \epsilon^{i,f}_t$ and $y^{i,g}_t := g^i(x^i_t) + \epsilon^{i,g}_t$, where $\epsilon^{i,f}_t$ and $\epsilon^{i,g}_t$ 
are independent zero-mean sub-Gaussian noise with sub-Gaussian norms $R^{i,f}$ and $R^{i,g}$ respectively.

We define two metrics: regret $R_N$ and constraint violation $V_N$ to capture the efficiency of the learning process, defined as
\begin{align}
    R_N:=\sum_{t=1}^N \sum_{i=1}^D \left( g^i(x^i_t) - g^i(x^i_*)\right),
    \text{and}~V_N&:=\sum_{t=1}^N \left(\sum_{i=1}^D f^i(x^i_t) - L\right)^+,\label{eqn:con_vio_def}
\end{align}
where $x_* := (x^1_*,\dots,x^D_*)$ is an optimal solution of problem~\eqref{eqn:offline_opt}, and $x^+:=\max\{x,0\},\forall x\in\mathbb{R}$. Regret $R_N$ 
quantifies the cumulative excessive cost consumption, and $V_N$ reflects the cumulative (hard) SLA constraint violation over the course of $N$ interactions between the E2E orchestrator and domain orchestrators. 
An ideal learning algorithm for this online setting ensures both a low regret and a low constraint violation, implying that 
it is highly likely to converge on the optimal offline solution in a few iterations.

The black-box optimizer guides the E2E orchestrator to select latency decompositions, i.e., sending them to domain orchestrators to learn the unknown function, while maintaining a balance between \emph{exploration} and \emph{exploitation}: Since the learning algorithm starts without prior knowledge of underlying functions, it must explore by sampling from some regions to reduce uncertainty (exploration), even if it means occasionally
foregoing the best possible decision based on the current knowledge (exploitation). An effective algorithm should efficiently strike this balance to avoid suboptimal local minima and attain optimal solutions.

This essentially means that a nontrivial algorithm should ensure that both regret $R_N$ and constraint violation $V_N$ are sub-linear in the number of 
interactions $N$ (i.e., $R_N = o(N)$ and $V_N = o(N)$), meaning that the algorithm converges to an optimal solution, since the average 
regret and constraint violation both converge to zero asymptotically, i.e., $\lim_{N\rightarrow\infty}\frac{R_N}{N}\rightarrow 0$ and $\lim_{N\rightarrow\infty}\frac{V_N}{N}\rightarrow 0$.

{\textbf{Sampling heterogeneity:}}
\system's black-box optimizer uses distinct querying epochs for each domain, reflecting their different performance volatilities. In practice, RAN exhibits the highest volatility, 
requiring more frequent queries, while the core remains the most stable, necessitating fewer updates.
Motivated by such needs, \system naturally allows the E2E orchestrator to query each domain with different epoch lengths.
Assuming that the querying epoch length 
for a domain $i$ is given by $\tau^i$ (unit of time), the E2E orchestrator does not propose a new decomposition until
receiving at least one new sample from every domain - i.e., if $\tau^{\text{DM}}$ is the minimum time interval between two new decompositions proposed by the
E2E orchestrator, then $\tau^{\text{DM}}\geq \max_i \tau^i$.
Given a time window $T$, the regret and the constraint violation within it are defined as:

\begin{align}
    R(T)&:= \tau^{\text{DM}} \cdot R_{T/\tau^{\text{DM}}} = \tau^{\text{DM}}\cdot \sum_{t=1}^{T/\tau^{\text{DM}}} \left(\sum_{i=1}^D g^i(x^i_t) - g^i(x^i_*)\right),\\
    V(T)&:= \tau^{\text{DM}} \cdot V_{T/\tau^{\text{DM}}} = \tau^{\text{DM}}\cdot \sum_{t=1}^{T/\tau^{\text{DM}}} \left(\sum_{i=1}^D f^i(x^i_t) - f^i(x^i_*)\right)^+.
\end{align}

To see why these definitions make sense, essentially, they are equal to 
the standard $T/\tau^{\text{DM}}$-round regret and constraint violation, \emph{scaled by} the length of the time interval ($\text{DM}$) to reflect that now 
the ``weight'' of each decision is (proportional to) $\tau^{\text{DM}}$.
Roughly speaking, our regret/constraint violation definition takes the effect of decision-making epoch into consideration and precisely reflects the performance within the time window $T$.

\section{Gaussian Process Surrogate Model}
\label{sec:gp_surrogate}
To model the unknown functions ($\{f^i\}_{i=1}^D$ and $\{g^i\}_{i=1}^D$), \system employs Gaussian Processes (GPs) as surrogate models. GPs not only predict unknown 
functions, but also quantify uncertainty in these predictions, allowing BO algorithms to effectively balance between exploration and exploitation. In addition, GPs offer computational advantages through closed-form updates.

\subsection{Learning Unknown Function $f^i$}
Conditioned on the set of all history observations $H^i_t := \{x^i_s,y^{i,f}_s\}_{s=1}^t$, the posterior distribution for $f^i$ is $\mathcal{GP}(\mu^{i,f}_t(\cdot),k^{i,f}_t(\cdot,\cdot))$, where
\begin{align}
    &\mu^{i,f}_t(x):=k^{i,f}_{t}(x)^\top(K^{i,f}_t+\lambda I_t)^{-1}Y^{i,f}_t, \label{eqn:GP_f_mean}\\
    &k^{i,f}_t(x,x'):=k^{i,f}(x,x') - k^{i,f}_t(x)^\top (K^{i,f}_t+\lambda I_t)^{-1} k^{i,f}_t(x'),\label{eqn:GP_f_var}
\end{align}
in which
$k^{i,f}_t(x):=[k^{i,f}(x^i_1,x),\dots,k^{i,f}(x^i_t,x)]^\top$, $k^{i,f}$ is the kernel for function $f^i$, $K^{i,f}_t := [k(x_m^i, x_n^i)]_{m,n=1,\dots,t}$, $I_t$ is the $t$-dim. identity matrix, and $Y^{i,f}_t$ is the noisy observation vector $[y^{i,f}_1,\dots,y^{i,f}_t]^\top$. We also define $\sigma^{i,f}_t(x)^2 := k^{i,f}_t(x,x)$. Under the concrete context of latency, vector $Y^{i,f}_t$ contains all (noisy) latency observations up to the $t$-th interaction from domain $i$, $\mu^{i,f}_t(x)$ is the estimated latency if we specify $x$ as a domain-level SLA to domain $i$, and $\sigma^{i,f}_t(x)^2$ represents the uncertainty at $x$. We let $K^{i,f}(A):=[k^{i,f}(x,x')]_{x,x'\in A}$ for any finite set $A\subset\mathcal{X}$, then we use $\gamma_t(k^{i,f}, \mathcal{X})$ to denote the \emph{information gain}, which is defined as $\gamma_t(k^{i,f}, \mathcal{X}):=\max_{A\subseteq\mathcal{X}:|A|=t} \frac{1}{2}\ln|I_t + \lambda^{-1}K^{i,f}(A)|$ and plays an important role in the theoretical analysis. It depends on both the kernel $k^{i,f}$ and domain $\mathcal{X}$ (which is $(\mathbb{R}_+)^D$ in this work), and is simplified as $\gamma_t$ whenever the context is clear. 
\begin{remark}
   In this work, \system assumes all the unknown functions to have the same kernel. However, \system's design framework can readily handle distinct kernels across different unknown functions. 
\end{remark}

\subsection{Learning Unknown Function $g^i$}
In order to learn the unknown function $g^i$, \system uses another GP surrogate model calculated similarly to the way discussed above, based on the set of observations $\widetilde{H}^i_t = \{x^i_s,y^{i,g}_s\}_{s=1}^t$, i.e.,

\begin{align}
    &\mu^{i,g}_t(x):=k^{i,g}_{t}(x)^\top(K^{i,g}_t+\lambda I_t)^{-1}Y^{i,g}_t,\label{eqn:GP_g_mean}\\
    &k^{i,g}_t(x,x'):=k^{i,g}(x,x') - k^{i,g}_t(x)^\top (K^{i,g}_t+\lambda I_t)^{-1} k^{i,g}_t(x').\label{eqn:GP_g_var}
\end{align}
\begin{remark}
   When all functions share the same kernel, all posterior variances $\{k^{i,f}_t(\cdot,\cdot),k^{i,g}_t(\cdot,\cdot)\}_{i=1}^D$ are identical.
\end{remark}

\section{Algorithm Design and Theoretical Analysis}
\subsection{Algorithm Overview}
\system's algorithm leverages the framework for constrained BO in \cite{pmlr-v202-xu23h}. The main idea is to keep solving the optimization problem~\eqref{eqn:offline_opt}, 
while replacing the unknown true functions with the optimistic/pessimistic estimates built using the GP surrogate models based on (noisy) observations. 
As more feedback is collected from the domain orchestrators over time, the estimates are improved
with the decisions converging to the optimal one.

The complete pseudocode is given in Algorithm~\ref{alg:BO}. 
\begin{algorithm}[t]
    \caption{\system: Bayesian Optimization  for SLA decomposition}\label{alg:BO}
    \begin{algorithmic}[1]
    \STATE \textbf{Input:} E2E SLA $L>0$, kernel function $k$, the number of domains $D$, decision-making epoch length $\tau^{\text{DM}}$, querying epoch length $\tau^{1},\dots,\tau^{D}\leq\tau^{\text{DM}}$, noise sub-Gaussian norms $\{R^{i,f},R^{i,c}\}_{i=1}^D$, time horizon $T$, failure probability $\delta\in(0,1)$
       \STATE \textbf{Define: } the number of interactions $N=T/\tau^{\text{DM}}$, $\beta^{i,f}_t := B^{i,f} + \frac{R^{i,f}}{\sqrt{\text{DM}/\tau^i}}\sqrt{2\left(\gamma_t+1+\ln(2D/\delta)\right)}$, and $\beta^{i,g}_t := B^{i,g} + \frac{R^{i,g}}{\sqrt{\text{DM}/\tau^i}}\sqrt{2\left(\gamma_t+1+\ln(2D/\delta)\right)}$
    \FOR{$t =1,\dots,N$}
     \STATE From each domain $i$, collect $\tau^{\text{DM}}/\tau^i$ independent noisy samples evaluated at $x^i_t$, and calculate the averages $(\overline{y}^{i,f}_t, \overline{y}^{i,g}_t)$ 
     \STATE Update surrogate models as in Eqs.~\eqref{eqn:GP_f_mean}, \eqref{eqn:GP_f_var}, \eqref{eqn:GP_g_mean}, and \eqref{eqn:GP_g_var}
     \STATE Determined new decomposed SLAs $x_t=(x^1_t,\dots,x^D_t)\in \text{argmin}_x \sum_{i=1}^D (\mu_t^{i,g}(x^i) -\beta_t^{i,g}\cdot \sigma_t^{i,g}(x^i))$ subject to $\sum_{i=1}^D (\mu_t^{i,f}(x^i) -\beta_t^{i,f}\cdot \sigma_t^{i,f}(x^i))-L\leq 0$, and apply them to all domains
    \ENDFOR
    \end{algorithmic}
\end{algorithm}
Given the horizon $T$ and decision-making interval $\tau^{\text{DM}}$, the algorithm performs $T/\tau^{\text{DM}}$ rounds of interactions. Before the $t$-th decision is made, there are $\tau^{\text{DM}}/\tau^i$ newly collected samples from domain $i$, all of which are (independent) noisy feedback evaluated at $x^i_{t-1}$. The samples are first averaged and then used for updating the surrogate model. The averaging trick simplifies the theoretical analysis, and saves
computational cost \cite{li2023private}.

\textbf{Computational complexity:}
The surrogate model update can be done efficiently via the ``recursive update''~\cite{chowdhury2017kernelized}. To determine the next decomposition (i.e., solving the auxiliary constrained problem), if the (search) space is discrete, one can simply iterate over it. If the space is continuous, one can also use efficient common practice in the literature~\cite{pmlr-v202-xu23h}: When the dimension is low, one can discretize the space with uniform grid; when the dimension is high, one can apply gradient-based methods or multiple re-starts with random starting points~\cite{wilson2018maximizing}. For each slice, GP update inherently incurs $N$-step total computational complexity of order $O(N^3)$. If Odin is deployed with admission control prior to slice provisioning, operators can potentially limit the total number of slices.

\textbf{Communicational efficiency:}
For each slice, \system requires only $2D$ scalars transmitted between domain orchestrators and the E2E orchestrator, resulting in negligible overhead over high-capacity optical networks, even if there are a large number of slices.

\subsection{Theoretical Analysis}

We first state the concentration result, on which the final regret analysis relies heavily. It roughly says how close the surrogate model is to the true unknown function, or at most how large the estimation error could be at any time $t$. Intuitively, the estimation error should decrease over time as more experience is gained, and lower estimation error should introduce lower regret and constraint violation. In the context of our problem, for any decomposed domain-level latency, it upper bounds the difference between the actual latency (cost resp.) determined by the surrogate model and that determined by the true function.
\begin{lemma}[Concentrations]\label{lemma:conc_ineq}
    With probability at least $1-\delta$, it holds for any $t\leq N$, decision $x$, and domain $i$ that 
    $|f^{i}(x)-\mu_t^{i,f}(x)|\leq \beta^{i,f}_t \cdot \sigma_t^{i,f}(x)$ and $|g^{i}(x)-\mu_t^{i,g}(x)|\leq \beta^{i,g}_t \cdot \sigma_t^{i,g}(x)$,
    where
    \begin{align*}
        \beta^{i,f}_t &:= B^{i,f} + \frac{R^{i,f}}{\sqrt{\tau^\text{DM}/\tau^i}}\sqrt{2\left(\gamma_t+1+\ln(2D/\delta)\right)},\\
        \beta^{i,g}_t& := B^{i,g} + \frac{R^{i,g}}{\sqrt{\tau^\text{DM}/\tau^i}}\sqrt{2\left(\gamma_t+1+\ln(2D/\delta)\right)}.
    \end{align*}
\end{lemma}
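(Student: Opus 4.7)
The plan is to derive the bound as a direct consequence of the standard RKHS self-normalized concentration for kernelized bandits (Theorem 2 of Chowdhury and Gopalan \cite{chowdhury2017kernelized}), applied to each of the $2D$ unknown functions, with one additional bookkeeping step to account for the sample-averaging in Line 4 of Algorithm~\ref{alg:BO} and a final union bound.

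First I would recall the single-function tool: if $h\in\mathcal{H}_k$ with $\lVert h\rVert_k\leq B$ is observed through $y_s=h(x_s)+\eta_s$ with $\eta_s$ conditionally mean-zero and $\tilde{R}$-sub-Gaussian, then the GP posterior $(\mu_t,\sigma_t)$ built from $\{(x_s,y_s)\}_{s=1}^t$ satisfies, with probability at least $1-\delta'$, $\lvert h(x)-\mu_t(x)\rvert\leq\bigl(B+\tilde{R}\sqrt{2(\gamma_t+1+\ln(1/\delta'))}\bigr)\sigma_t(x)$ uniformly in $t$ and $x$. The uniformity in $t$ and $x$ is already built into the self-normalized martingale argument, so no additional union bound over rounds or over the (possibly continuous) decision space is needed.

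Second I would handle the averaging. In round $t$, the GP for $f^i$ is fed $\overline{y}^{i,f}_t=f^i(x^i_t)+\overline{\epsilon}^{i,f}_t$, where $\overline{\epsilon}^{i,f}_t$ is the mean of $m_i:=\tau^{\text{DM}}/\tau^i$ i.i.d.\ $R^{i,f}$-sub-Gaussian variables; standard properties give that $\overline{\epsilon}^{i,f}_t$ is conditionally mean-zero and $R^{i,f}/\sqrt{m_i}$-sub-Gaussian with respect to the filtration generated by $\{x^i_s,\overline{\epsilon}^{i,f}_s\}_{s<t}\cup\{x^i_t\}$. Invoking the single-function bound with $\tilde{R}=R^{i,f}/\sqrt{\tau^{\text{DM}}/\tau^i}$ and $B=B^{i,f}$ yields exactly the shape of $\beta^{i,f}_t\sigma^{i,f}_t(x)$ up to the $\ln(1/\delta')$ term, and the same argument applies verbatim to $g^i$ with $R^{i,g}$ and $B^{i,g}$. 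Finally I apply a union bound over the $2D$ targets by taking $\delta'=\delta/(2D)$ per function, which replaces $\ln(1/\delta')$ with $\ln(2D/\delta)$ and produces the exact expressions for $\beta^{i,f}_t$ and $\beta^{i,g}_t$ stated in the lemma.

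The main obstacle is the averaging-and-rescaling step: one has to verify that the filtration remains adapted (the chosen $x^i_t$ depends only on past \emph{averaged} observations, so the fresh $m_i$ samples in round $t$ are independent of everything $\mathcal{F}_{t-1}$-measurable, and the sub-Gaussian parameter of their mean shrinks by $\sqrt{m_i}$), and to note that the confidence bound does not require the GP regularizer $\lambda$ in Eqs.~\eqref{eqn:GP_f_mean}--\eqref{eqn:GP_g_var} to match the true (rescaled) noise variance — the Chowdhury--Gopalan inequality holds for any fixed $\lambda>0$ through the information gain $\gamma_t$. Once this is checked, the rest is a routine union bound.
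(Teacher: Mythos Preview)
Your proposal is correct and mirrors the paper's own argument: the paper simply notes that averaging $n$ independent $R'$-sub-Gaussian samples yields an $R'/\sqrt{n}$-sub-Gaussian noise, then invokes the concentration result stated as Lemma~2.4 in \cite{pmlr-v202-xu23h} (which is the same self-normalized RKHS bound you cite from \cite{chowdhury2017kernelized}) with the rescaled norm; your explicit union bound over the $2D$ functions and the filtration check just spell out what is implicit in the paper's one-line reference.
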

\begin{proof}[Proof of Lemma~\ref{lemma:conc_ineq}]
    For any sub-Gaussian norm $R'>0$ and number of samples $n\geq1$, the average of $n$ independent $R'$-sub-Gaussian samples is $\frac{R'}{\sqrt{n}}$-sub-Gaussian. The claimed results are then implied by Lemma~2.4 in \cite{pmlr-v202-xu23h} via replacing the sub-Gaussian norm.
\end{proof}

\noindent In the lemma above, we show upper bounds on the deviation at time $t$. Another key lemma we need is also standard for BO, which bounds cumulative deviation over time and is formally stated below. The key role of this lemma is that, eventually our regret and constraint violation are controlled by such cumulative deviations.
\begin{lemma}[Lemma~4 in \cite{chowdhury2017kernelized}]
    For any domain $i$, the sequence $x^i_1,\dots,x^i_N$ chosen by Algorithm~\ref{alg:BO} satisfies both
    $$
        \sum_{t=1}^N \sigma^{i,f}_t(x^i_t)\leq 2\sqrt{(N+2)\gamma_{N}} \text{~and~}
         \sum_{t=1}^N \sigma^{i,g}_t(x^i_t)\leq 2\sqrt{(N+2)\gamma_{N}}.
    $$
\end{lemma}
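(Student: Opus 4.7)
The plan is to follow the standard argument for bounding cumulative GP posterior standard deviations in kernelized bandits, as established in \cite{chowdhury2017kernelized} (and originally, in a slightly different form, in \cite{srinivas2010gaussian}). A key observation is that the posterior variance formula in Eq.~\eqref{eqn:GP_f_var} depends only on the sampling locations $\{x^i_s\}_{s=1}^t$ and the kernel $k^{i,f}$, and is \emph{independent} of the observed values $Y^{i,f}_t$. Consequently, the same chain of inequalities applies line-by-line to both $\sigma^{i,f}_t$ and $\sigma^{i,g}_t$ (with $k^{i,f}$ replaced by $k^{i,g}$), so it suffices to prove the first bound and the second follows by an identical argument.

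First I would apply Cauchy--Schwarz to convert a sum of $\sigma$'s to a sum of $\sigma^2$'s:
\[
    \sum_{t=1}^N \sigma^{i,f}_t(x^i_t) \leq \sqrt{N \sum_{t=1}^N \sigma^{i,f}_t(x^i_t)^2},
\]
reducing the task to bounding the cumulative \emph{squared} posterior variance. Next, I would pass from squared variances to logarithms via the elementary inequality $z \leq \frac{A}{\ln(1+A)} \ln(1+z)$ for $z \in [0,A]$, applied with $z = \lambda^{-1} \sigma^{i,f}_t(x^i_t)^2$. Under the standard boundedness of $k^{i,f}(x,x)$, the argument $z$ lies in a bounded interval, which fixes $A$ and yields a per-step bound of the form $\sigma^{i,f}_t(x^i_t)^2 \leq C \ln\bigl(1 + \lambda^{-1}\sigma^{i,f}_t(x^i_t)^2\bigr)$ with $C$ depending only on the kernel normalization and $\lambda$.

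Third, I would invoke the telescoping chain-rule identity for GP posteriors,
\[
    \sum_{t=1}^N \ln\!\bigl(1+\lambda^{-1} \sigma^{i,f}_{t-1}(x^i_t)^2\bigr) = \ln\det\!\bigl(I_N + \lambda^{-1} K^{i,f}(\{x^i_1,\dots,x^i_N\})\bigr),
\]
and recognize the right-hand side as being at most $2\gamma_N$ by the definition of the maximum information gain stated just after Eq.~\eqref{eqn:GP_f_var}. Combining this with the previous two steps gives $\sum_{t=1}^N \sigma^{i,f}_t(x^i_t)^2 = O(\gamma_N)$, and substituting into the Cauchy--Schwarz bound delivers a quantity of the form $\sqrt{C' \, N \, \gamma_N}$. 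Careful bookkeeping of the constants (including the slack absorbed by the factor $N+2$ rather than $N$) yields exactly the stated $2\sqrt{(N+2)\gamma_N}$.

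The main obstacle I expect is the index reconciliation: Algorithm~\ref{alg:BO} writes $\sigma^{i,f}_t(x^i_t)$ (the posterior variance using all samples through round $t$), whereas the chain-rule identity naturally involves $\sigma^{i,f}_{t-1}(x^i_t)$ (the \emph{predictive} variance before round $t$'s observation). Since the posterior variance is monotone non-increasing in the number of conditioning samples, we have $\sigma^{i,f}_t(x^i_t) \leq \sigma^{i,f}_{t-1}(x^i_t)$, so bounds on the latter transfer to the former. Handling this shift cleanly, together with the constant from the $z \leq \frac{A}{\ln(1+A)}\ln(1+z)$ step, is the only non-routine part and is what produces the $(N+2)$ rather than a bare $N$ in the final constant; every other step is standard kernel algebra.
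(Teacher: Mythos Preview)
The paper does not supply its own proof of this lemma; it is quoted verbatim as Lemma~4 of \cite{chowdhury2017kernelized} and invoked as a black-box tool in the proof of Theorem~\ref{thm:reg_and_vio_bound}. So there is nothing in the paper to compare your argument against beyond the citation itself.

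That said, your sketch is the correct standard argument from \cite{chowdhury2017kernelized} (building on \cite{srinivas2010gaussian}): Cauchy--Schwarz, then the $z\le \frac{A}{\ln(1+A)}\ln(1+z)$ linearization, then the log-determinant telescoping identity, then the definition of $\gamma_N$. Your observation that the posterior variance depends only on the query locations (not on the noisy responses) is exactly what makes the identical bound hold for both $\sigma^{i,f}_t$ and $\sigma^{i,g}_t$, and your handling of the $\sigma_t$ versus $\sigma_{t-1}$ index shift via monotonicity is the right fix. The only place your write-up is slightly loose is the final accounting for the constant $2\sqrt{(N+2)\gamma_N}$: in \cite{chowdhury2017kernelized} this specific form arises because the regularizer is taken as $\lambda = 1+\eta$ with $\eta = 2/N$, which feeds through the $\frac{A}{\ln(1+A)}$ step to produce the $(N+2)$ factor rather than a generic $O(1)\cdot N$. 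If you want the stated constant exactly, you should track that choice of $\lambda$ explicitly rather than leaving it as ``careful bookkeeping.''
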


Finally, we are able to show our main results, i.e, the respective upper bounds on regret and constraint violation, formally below.
\begin{theorem}[Main Theoretical Guarantee]\label{thm:reg_and_vio_bound}
With probability at least $1-\delta$, Algorithm~\ref{alg:BO} ensures 
$$R(T) = O\left(\tau^{\text{DM}}\sqrt{T\cdot \gamma_{T/\tau^{\text{DM}}}}\cdot \sum_{i=1}^D\left(B^{i,g} + \frac{R^{i,g}}{\sqrt{\tau^{\text{DM}}/\tau^i}}\sqrt{\gamma_{T/\tau^{\text{DM}}}}\right)\right)$$ $$V(T) =O\left(\tau^{\text{DM}}\sqrt{T\cdot \gamma_{T/\tau^{\text{DM}}}}\cdot \sum_{i=1}^D\left(B^{i,f} + \frac{R^{i,f}}{\sqrt{\tau^{\text{DM}}/\tau^i}}\sqrt{\gamma_{T/\tau^{\text{DM}}}}\right)\right).$$
\end{theorem}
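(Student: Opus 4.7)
The plan is to follow the standard constrained GP-bandit template (as in Xu et al. \cite{pmlr-v202-xu23h}) adapted to the multi-domain decomposition setting: the per-round regret and violation are each controlled by a sum of posterior standard deviations, which we then aggregate over time using the information-gain bound of Lemma~5.2. All arguments are carried out on the high-probability event $\mathcal{E}$ of Lemma~5.1, which holds uniformly in $t\leq N$, $x$ and $i$ with probability at least $1-\delta$.

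First I would verify that the offline optimum $x_*$ remains feasible for the \emph{surrogate} constraint used in line~6 of Algorithm~\ref{alg:BO}. Under $\mathcal{E}$, for every domain $i$ we have $\mu_t^{i,f}(x_*^i) - \beta_t^{i,f}\sigma_t^{i,f}(x_*^i) \leq f^i(x_*^i)$, so summing over $i$ and using $\sum_i f^i(x_*^i)\leq L$ yields $\sum_i (\mu_t^{i,f}(x_*^i)-\beta_t^{i,f}\sigma_t^{i,f}(x_*^i)) \leq L$. This justifies comparing the algorithm's choice $x_t$ to $x_*$ through the optimality of $x_t$ in the surrogate problem, giving $\sum_i(\mu_t^{i,g}(x_t^i)-\beta_t^{i,g}\sigma_t^{i,g}(x_t^i)) \leq \sum_i(\mu_t^{i,g}(x_*^i)-\beta_t^{i,g}\sigma_t^{i,g}(x_*^i)) \leq \sum_i g^i(x_*^i)$, where the last inequality again invokes Lemma~5.1.

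For the regret bound, I would use the other direction of Lemma~5.1 to write $g^i(x_t^i) \leq \mu_t^{i,g}(x_t^i)+\beta_t^{i,g}\sigma_t^{i,g}(x_t^i)$, so that
\begin{equation*}
\sum_{i=1}^D \bigl(g^i(x_t^i)-g^i(x_*^i)\bigr) \;\leq\; 2\sum_{i=1}^D \beta_t^{i,g}\sigma_t^{i,g}(x_t^i).
\end{equation*}
Summing over $t=1,\dots,N$, bounding $\beta_t^{i,g}$ by the monotone $\beta_N^{i,g}$, and applying Lemma~5.2 to each domain yields $R_N \leq 4\sum_i \beta_N^{i,g}\sqrt{(N+2)\gamma_N}$. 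Plugging in $N=T/\tau^{\text{DM}}$ together with the explicit form of $\beta_N^{i,g}$ and multiplying by $\tau^{\text{DM}}$ gives exactly the stated bound on $R(T)$. The constraint-violation bound is proved in parallel: under $\mathcal{E}$, $\sum_i f^i(x_t^i) \leq \sum_i (\mu_t^{i,f}(x_t^i)+\beta_t^{i,f}\sigma_t^{i,f}(x_t^i))$, while the surrogate feasibility of $x_t$ gives $\sum_i \mu_t^{i,f}(x_t^i) \leq L + \sum_i \beta_t^{i,f}\sigma_t^{i,f}(x_t^i)$. Combining,
\begin{equation*}
\Bigl(\sum_{i=1}^D f^i(x_t^i)-L\Bigr)^{+} \;\leq\; 2\sum_{i=1}^D \beta_t^{i,f}\sigma_t^{i,f}(x_t^i),
\end{equation*}
and the same summation/Lemma~5.2 argument produces $V(T)$.

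The only nonroutine step is the handling of the averaged sub-Gaussian noise and the heterogeneous querying epochs $\tau^i$; this is already absorbed into the confidence radii $\beta_t^{i,f},\beta_t^{i,g}$ via the $\sqrt{\tau^{\text{DM}}/\tau^i}$ scaling in Lemma~5.1, so once that lemma is in hand the rest is bookkeeping. The main conceptual obstacle is simply recognizing that using \emph{lower} confidence bounds in the surrogate constraint is what keeps $x_*$ feasible (needed for the regret bound) while only inflating the instantaneous violation by an additive $\sum_i 2\beta_t^{i,f}\sigma_t^{i,f}(x_t^i)$ (needed for the violation bound); everything else is a direct application of Lemmas~5.1 and~5.2 followed by the $\tau^{\text{DM}}$-rescaling that converts $R_N,V_N$ into $R(T),V(T)$.
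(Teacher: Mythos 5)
Your proposal is correct and follows essentially the same route as the paper's proof: the concentration event of Lemma~\ref{lemma:conc_ineq} guarantees both that $x_*$ stays feasible for the lower-confidence surrogate constraint (which, via the optimality of $x_t$ in the auxiliary problem, yields the per-round regret bound $2\sum_i \beta_t^{i,g}\sigma_t^{i,g}(x_t^i)$) and that the surrogate feasibility of $x_t$ bounds the per-round violation by $2\sum_i \beta_t^{i,f}\sigma_t^{i,f}(x_t^i)$, after which Lemma~5.2 and the $\tau^{\text{DM}}$-rescaling give the stated bounds exactly as in the paper. No gaps; your explicit remark about bounding $\beta_t$ by the monotone $\beta_N$ is a point the paper uses implicitly.
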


\begin{remark}
    The bounds stated above capture the impact of both the decision-making epoch length $\tau^{\mathrm{DM}}$ and the domain heterogeneity in terms of noise level ($R^{i,f},R^{i,g}$), smoothness of functions' behavior ($B^{i,f},B^{i,g}$), and data querying epoch ($\tau^i$) on the algorithm's performance. This offers an informative theoretical toolbox on what performance to expect given all these conditions, or how frequently each domain should be queried (to ``compensate'' the negative impact from high noise level and/or non-smooth behavior).
\end{remark}
\begin{remark}
    The bounds are indeed sub-linear for common kernels. E.g., when all functions are linear (i.e., $k$ is the linear kernel function), we have $\tau_t=O(\ln t)$, and the bounds become $$R(T) = O\left(\sqrt{T\ln T}\cdot \sum_{i=1}^D\left(B^{i,g} + R^{i,g}\sqrt{\ln T}\right)\right)$$ and $V(T) = O\left(\sqrt{T\ln T}\cdot \sum_{i=1}^D\left(B^{i,f} + R^{i,f}\sqrt{\ln T}\right)\right),$
    when $\tau^{\text{DM}}=\tau^1=\dots=\tau^D=1$. Both bounds are $\widetilde{O}(D\sqrt{T})$ and hence sub-linear, implying a provable convergence to the optimal solution.
\end{remark}

\begin{proof}[Proof of Theorem~\ref{thm:reg_and_vio_bound}]
   This proof is an adaptation from that of Theorem~4.3 in \cite{pmlr-v202-xu23h}. We first bound regret (and constraint violation) in terms of the number of interactions $N=T/\tau^{\text{DM}}$.

   Let $r_t:=\sum_{i=1}^D \left( g^i(x^i_t) - g^i(x^i_*)\right)$ be the instantaneous regret incurred in the $t$-th decision. It can be upper bounded as
   \begin{align}
       r_t &= \sum_{i=1}^D \left( g^i(x^i_t) - g^i(x^i_*)\right)\nonumber\\
       &\overset{\text{(a)}}{\leq} \sum_{i=1}^D \left( \mu^{i,g}(x^i_t) + \beta_t^{i,g}\cdot \sigma_t^{i,g}(x_t^i) \right)-\sum_{i=1}^D \left( \mu^{i,g}(x^i_*) - \beta_t^{i,g}\cdot \sigma_t^{i,g}(x_*^i)\right)\nonumber\\
       &\overset{\text{(b)}}{\leq} \sum_{i=1}^D \left( \mu^{i,g}(x^i_t) + \beta_t^{i,g}\cdot \sigma_t^{i,g}(x^i_t) \right)-\sum_{i=1}^D \left( \mu^{i,g}(x^i_t) - \beta_t^{i,g}\cdot \sigma_t^{i,g}(x^i_t)\right)\nonumber\\
       &= 2\cdot \sum_{i=1}^D \beta_t^{i,g}\cdot \sigma_t^{i,g}(x^i_t),
   \end{align}
where step (a) is due to Lemma~\ref{lemma:conc_ineq} and step (b) holds since
\begin{align}
    \sum_{i=1}^D (\mu_t^{i,f}(x_*^i) -\beta_t^{i,f}\cdot \sigma_t^{i,f}(x_*^i)) \leq \sum_{i=1}^D f^i(x_*^i)\leq L.
\end{align}
   
    Now we can readily bound the cumulative regret as
    \begin{align}
        R_N = \sum_{t=1}^N r_t
        \leq 2\cdot \sum_{t=1}^N\sum_{i=1}^D \beta_t^{i,g}\cdot \sigma_t^{i,g}(x^i_t)&\leq 2\cdot \sum_{i=1}^D\beta_N^{i,g}\sum_{t=1}^N  \sigma_t^{i,g}(x^i_t)\nonumber\\
        &\leq 4\cdot \sum_{i=1}^D\beta_N^{i,g}\sqrt{(N+2)\gamma_{N}}.\nonumber
    \end{align}
    Finally, we arrive at the bound on $R(T)$ by replacing $N$ and $\beta_N^{i,g}$ with their own definitions respectively, and multiplying $R_N$ by $\tau^{\text{DM}}$.

    Similarly, we define $v_t:= \left( \sum_{i=1}^D f^i(x^i_t) - L\right)^+$ as the instantaneous (hard) constraint violation incurred in the $t$-th decision. It can be bounded as
    \begin{align}
        v_t & \overset{\text{(a)}}{\leq} \left( \sum_{i=1}^D \left( \mu_t^{i,f}(x^i_t) + \beta_t^{i,f}\cdot \sigma_t^{i,f}(x^i_t)\right) - L\right)^+\nonumber\\
        &\leq \left( \sum_{i=1}^D \left(\mu_t^{i,f}(x^i_t) - \beta_t^{i,f}\cdot \sigma_t^{i,f}(x^i_t)\right) - L\right)^+ + 2\beta_t^{i,f}\cdot \sigma_t^{i,f}(x^i_t)\nonumber\\
        &\overset{\text{(b)}}{\leq} 2\beta_t^{i,f}\cdot \sigma_t^{i,f}(x^i_t),
    \end{align}
where step (a) is from Lemma~\ref{lemma:conc_ineq} and step (b) holds simply due to the algorithm design:
\begin{align}
     \sum_{i=1}^D \left(\mu_t^{i,f}(x^i_t) - \beta_t^{i,f}\cdot \sigma_t^{i,f}(x^i_t)\right) - L\leq 0.
\end{align}

\begin{figure*}[t]
  \centering
      \includegraphics[width=0.8\textwidth, height=1.8in]{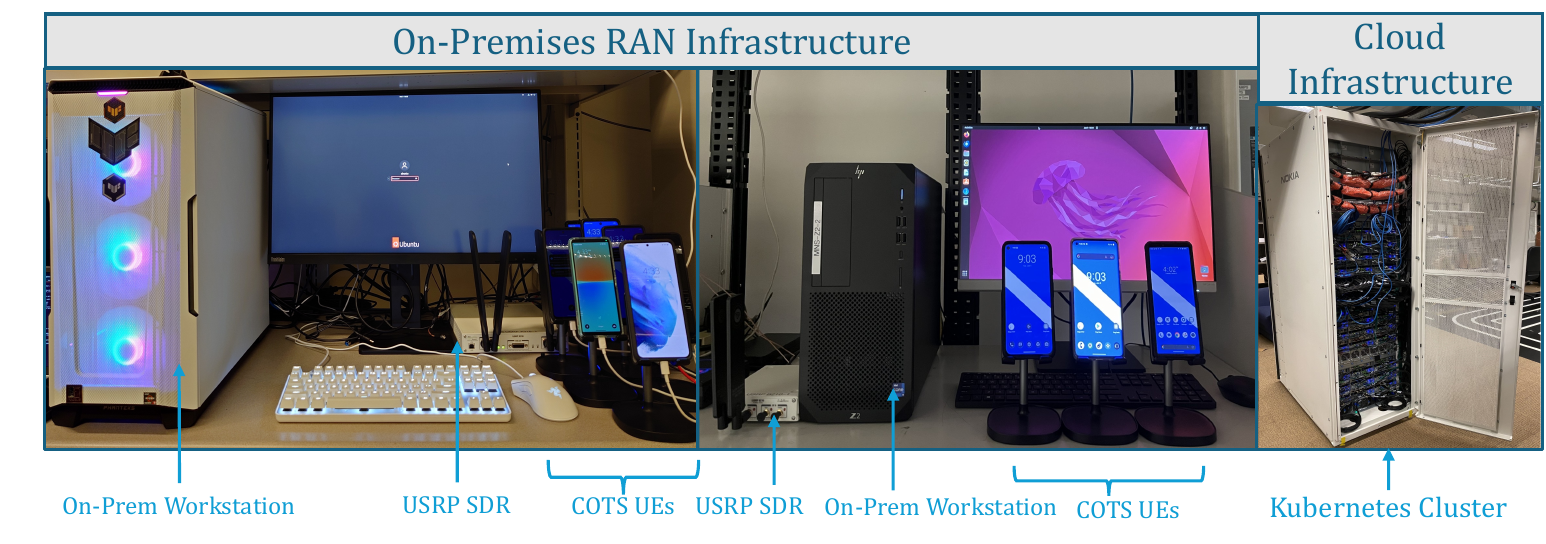}
  \caption{RAN and core testbed setup used in our trace data collection}
  \label{fig:testbed}
\end{figure*}
    
Lastly, we can bound cumulative constraint violation $V(T)$ via same path from $r_t$ to $R(T)$.
\end{proof}
Sub-linear bounds on regret and constraint violation mean that, the time-averaged performance gap between \system's algorithm and the optimal solution is diminishing over time. Asymptotically, this gap goes to zero, meaning that \system's algorithm performs on average as well as the optimal decomposition.

\section{System Implementation}
\label{sec:implement}
We conduct both trace-based evaluation as well as simulations to capture the efficacy of \system, comparing its performance to baseline solutions. 
The trace-based evaluations use data collected from our RAN, core, and transport testbeds that mimic real-world deployments. Below, we discuss our testbed implementation for domain-data collection and the procedure for trace data collection. 

\noindent\textbf{RAN testbed:}
We build an over-the-air experimental Open RAN (O-RAN) testbed using 5 HexRAN~\cite{kak2024hexranprogrammableapproachopen} 
gNodeBs (gNBs), the HexRIC~\cite{10.1145/3572864.3580329} RAN controller, and 20 commercial smartphones 
evenly distributed across the five gNBs (Fig.~\ref{fig:testbed}).
Each gNB runs on a 
ThinkStation P5 workstation augmented with a USRP X310 SDR which broadcasts a 100MHz n78 carrier capable of 500Mbps 
downlink throughput. The HexRIC controller is used both to vary the number of availabel PRBs 
 allocated to users and to collect trace collection downlink RAN-level latency traces. Within this context, we execute five distinct scenarios by varying the downlink traffic for each user 
from 25Mbps to 125Mbps in 25 Mbps increments, resulting in aggregate traffic from 500 Mbps to 2500 Mbps. In each scenario, we fix a different MCS via HexRIC and dynamically vary the PRB count, measuring latency for each MCS–PRB combination. 

\noindent\textbf{Transport network testbed:}
%\label{sec:tran_testbed}
We simulate the transport domain using Mininet~\cite{mininet}, modeling it as a fully connected graph with six hosts and eight switches. This topology ensures multiple paths exist between any two hosts. To replicate real-world network conditions, we introduce background UDP traffic between all host pairs at distinct rates using the MGEN traffic generator~\cite{mgen}.

The primary latency measurement is conducted between two designated hosts — one representing an interface to the core's UPF and the other to the RAN — using \emph{Two-Way Active Measurement Protocol} (TWAMP)~\cite{hedayat2008two}. The cost metric is defined as the minimum path bandwidth required to meet a specified latency threshold, where the path bandwidth is determined by the lowest link-bandwidth (measured using \emph{Bandwidth Monitor NG} \cite{bmng}) along a given path. The transport orchestrator's role is to select the least expensive path that best meets the latency requirement.

\noindent\textbf{Core network testbed:}
The core network testbed (see Fig.~\ref{fig:testbed}) is built using network functions from the Open5GS project~\cite{open5gs}. We deploy the control plane functions (i.e., AMF, SMF, etc.) on a single HPE ProLiant DL360 Gen11 server, while the User Plane Function (UPF) is deployed on a workstation with an Intel i9-13900KS CPU capable of a 6~GHz maximum clock speed. We use \emph{cpupower}~\cite{Canonical} to vary the CPU frequency of UPF workstations against the backdrop of varying user plane traffic. We then measure the corresponding core-level user plane latency for each combination between CPU frequency and traffic value.

\noindent\textbf{Trace data collection:}
We collect data from each domain independently and combine them offline to mimic the core-to-RAN dataflow. In the RAN testbed, downlink latency from gNB to UE is measured for fixed MCS and varying PRBs using Poisson UDP traffic from MGEN~\cite{mgen}. We use MGEN~\cite{mgen} for UDP traffic generation that follows a Poisson distribution. 
For the core dataset, we collect the latency values for the UPF, which handles the user data packets. We measure the latency values between
the N3 and N6 interfaces of the core. We vary the CPU frequency from 1~GHz to 5.5~GHz and record the CPU usage as it handles processing the data traffic. The transport domain latency
is collected between pairs of hosts for different combinations of paths and bandwidths.

The data collected from the three domains are used to model the unknown
(black-box) performance functions ($\{f^i\}_{i=1}^D$) while the black-box cost functions ($\{g^i\}_{i=1}^D$) for each domain are modeled using the number of PRBs, link bandwidth, and the CPU 
frequency data for the RAN, transport and the core domains, respectively. 
\section{Evaluations}
\label{sec:eval}
In this section, we present \system's performance in trace-based evaluations followed by large-scale simulations.
In our evaluations, there is no need to choose the kernel function or tail parameters, since the kernel (matrix) is determined \emph{empirically}, following \cite{chowdhury2017kernelized}.
Also, our focus is on each domain’s intrinsic heterogeneity (e.g., latency/cost behavior), therefore, we ignore query epochs by setting $\tau^{\text{DM}}=\tau^1=\dots=\tau^D=1$. We compare \system with several baseline solutions, which are introduced below.

\begin{figure}[!t]   
  \centering           
      \includegraphics[width=0.50\textwidth, height=1.95in]{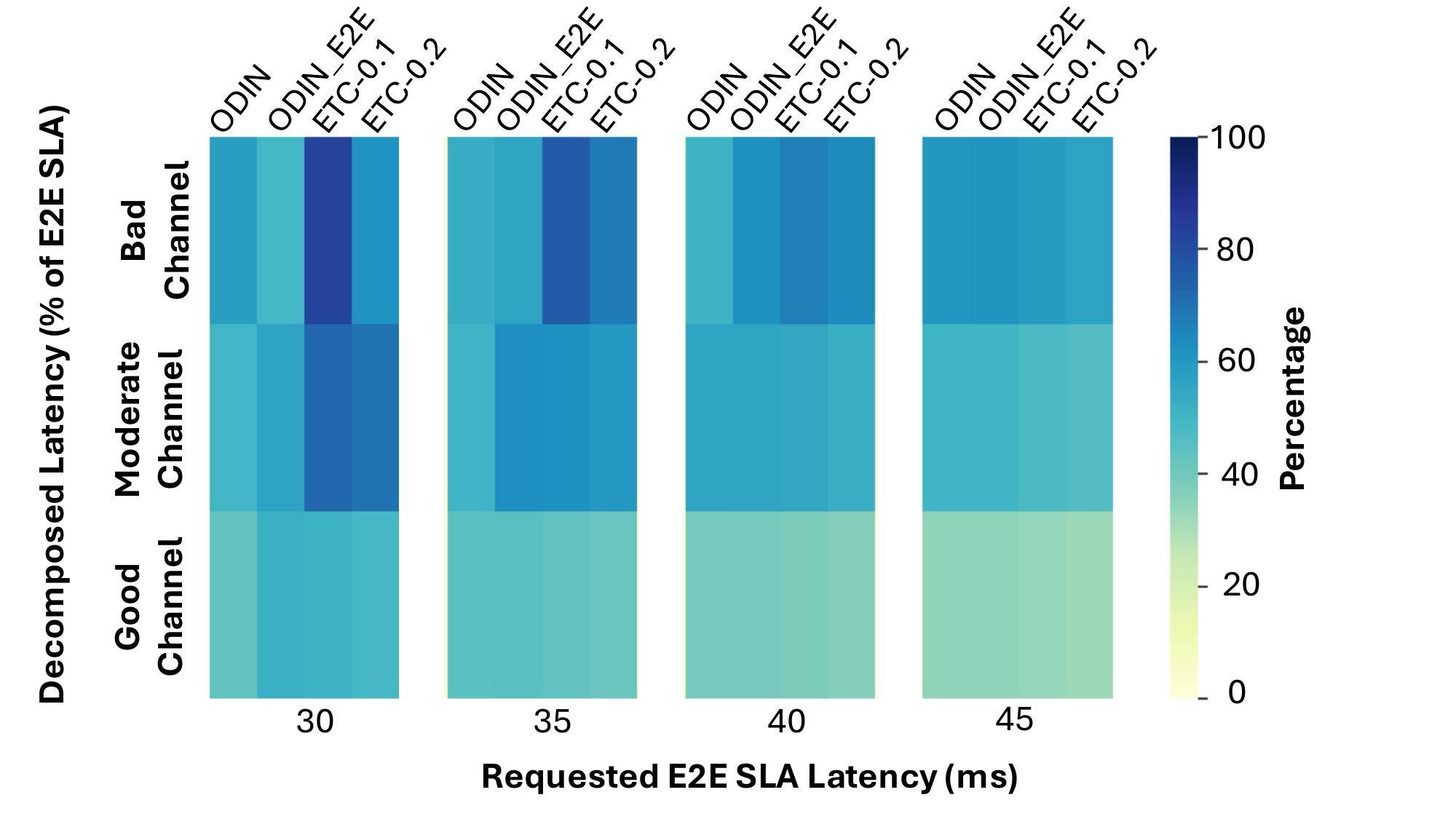}
  \caption{Latency decomposition in the RAN under various channel conditions} 
  \label{fig:heatmap_ran}           
\end{figure}

\begin{figure*}[!t]   
  %\centering           
  \subfloat[Regret] 
  {
      \label{fig:trace_reg}\includegraphics[width=0.33\textwidth]{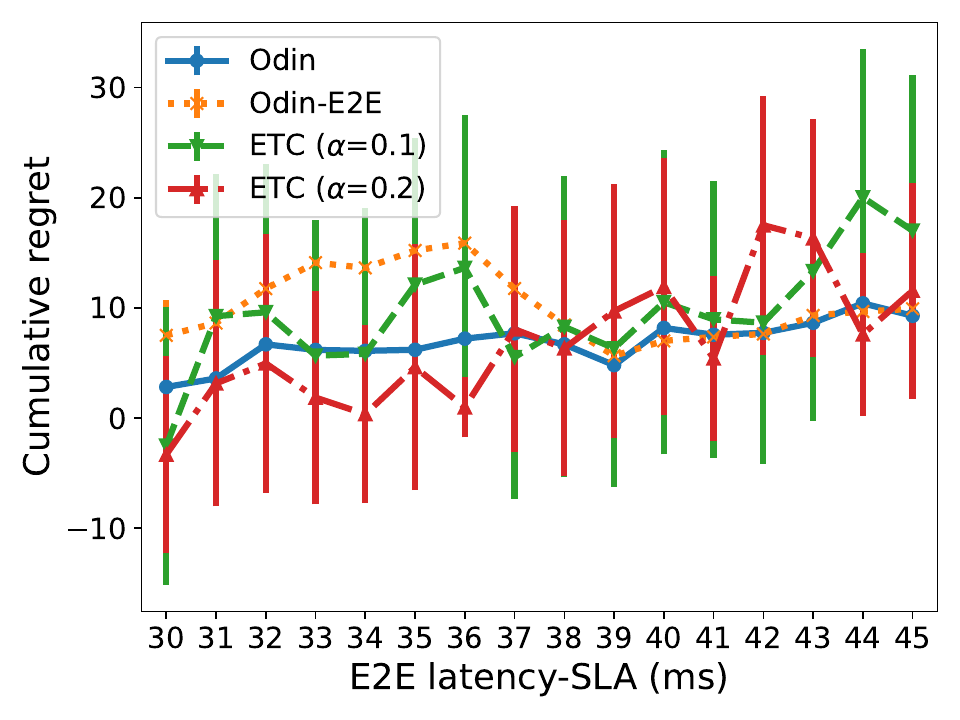}
  }
  \subfloat[Constraint violation]
  {
      \label{fig:trace_con_vio}\includegraphics[width=0.33\textwidth]{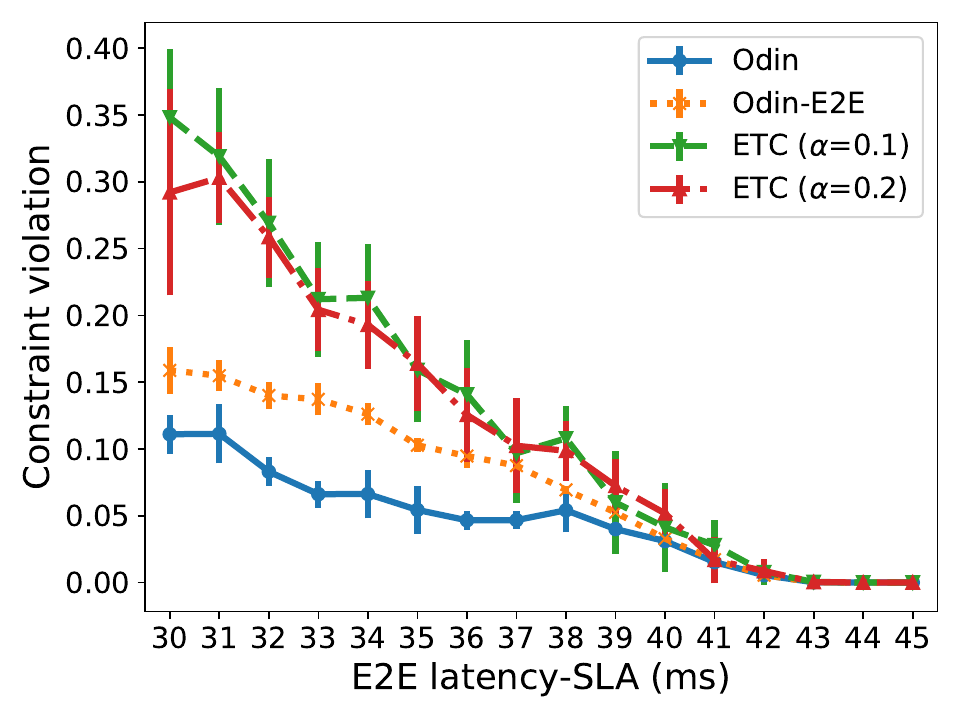}
  }
  \subfloat[E2E latency performance] 
  {
      \label{fig:trace_lat}\includegraphics[width=0.33\textwidth]{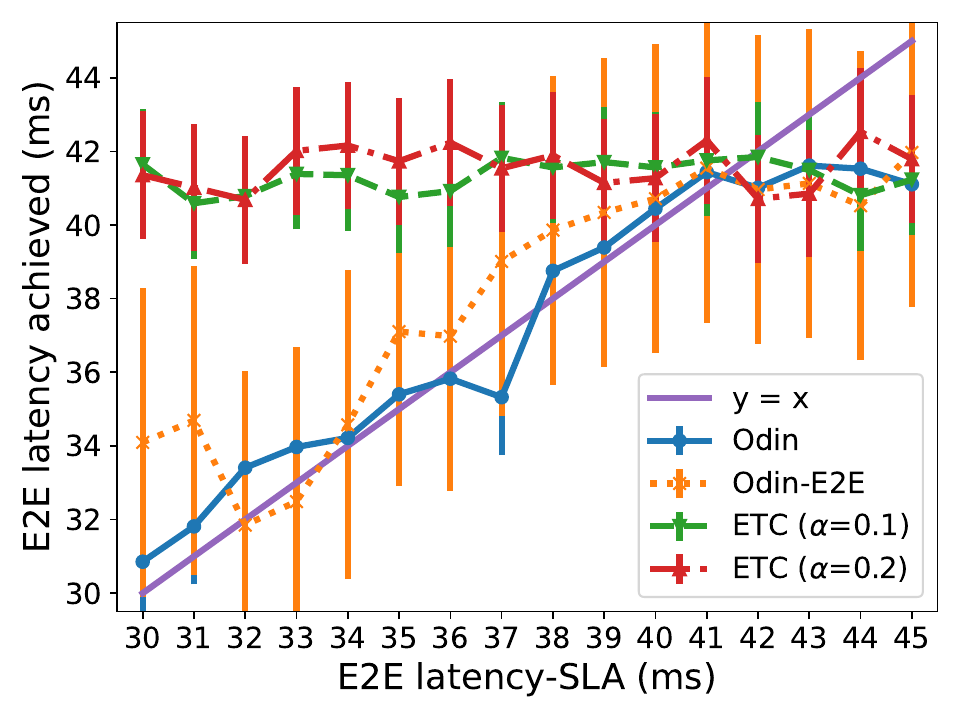}
  }

\caption{\system vs. baselines in trace-based evaluation} 
  \label{fig:trace_data}           
\end{figure*}
\subsection{Baseline Solutions}
\noindent {\bf {\system-E2E:}} This is the E2E variant of \system's design in Alg.~\ref{alg:BO}. The only difference being that \system-E2E treats the entire
network as a single ``bigger'' black-box function - i.e., in online interactions, it observes the E2E latency and total costs combined from all domains, instead of estimating them per-domain.
As a BO problem, it builds GP surrogate models directly on the unknown (E2E) performance function $f':=\sum_{i=1}^D f^i$ and cost function $g':=\sum_{i=1}^D g^i$.

\noindent {\bf {Explore-then-commit (ETC):}} ETC is obtained by adapting the design in \cite{de2020decomposing,hsu2023sla} to the online case. For a given $\alpha\in(0,1)$, 
the ETC uniformly explore the search space, observing the outputs of the unknown functions for the first $\alpha T$ rounds. The ETC
uses these observations to build its surrogate models. In the remaining $(1-\alpha)T$ rounds, the ETC uses these surrogate models as a replacement for the
unknown functions solving the optimization defined in Eq.~\eqref{eqn:offline_opt}.
In our evaluations, we use $\alpha = 0.1$  and $0.2$.
\subsection{Trace-Based Evaluation}
\label{sec:trace_eval}

We conduct our trace-based evaluations using the data collected on our testbeds. First, we show \system's ability to adapt the 
decomposition to the varying domain conditions. 

For this experiment, we use the RAN data with fixed MCS = 16 (bad SINR) to capture the bad channel effects on the performance and cost, while for the moderate
and the good channel conditions in RAN, we use the data collected with MCS = 22 and MCS = 27, respectively. We keep the transport and core conditions fixed.

The heatmap in Fig.~\ref{fig:heatmap_ran} shows each algorithm's latency decomposition for the RAN (as a percentage of the E2E target) under different channel 
conditions. While \system-E2E 
ignores the RAN channel condition and makes similar decompositions irrespective of RAN performance, \system adapts the decomposition to the RAN domain conditions. This can be best observed for E2E SLA targets of 30~ms and 35~ms. \system provides 
the RAN with a
latency target that is relatively higher when the channel is bad (darker shade, RAN's share of decomposed latency is 56\% of E2E latency when SLA = 30ms) but the latency share for RAN is reduced (i.e., lighter shade, 38\% of E2E latency when SLA = 30ms) when the channel is good. Similarly, latency share for the RAN when the channel is moderate is in-between the good and bad RAN conditions (e.g., 38\% of E2E latency when SLA = 30ms). While the ETC algorithms on the other hand decomposes providing the RAN a latency target higher than what \system provides when the channel is bad (78\% of E2E latency when SLA=30ms), but  as we observe in Fig.~\ref{fig:trace_lat}, ETC-0.1 achieves the E2E latency of 42~ms when the SLA is 30~ms mainly because the other two domains (core and the transport) fail to achieve the small latency targets given to them. \system performs the latency decomposition by taking the domain conditions whilst still ensuring that the SLA is not violated.

We next plot the (cumulative) regret\footnote{Regrets can be negative due to the formulation in problem~\eqref{eqn:offline_opt}: Some inputs could produce 
negative per-round regret by violating the constraint.} and  constraint violation\footnote{In all figures, we plot the cumulative constraint violation \emph{normalized} by the E2E SLA value, i.e., $V_T$ defined as in Eq.~\eqref{eqn:con_vio_def} further \emph{divided by} $LT$.} in the graphs shown in Fig.~\ref{fig:trace_data} for various E2E SLA targets ranging from
$\{30,\dots,45\}$ (which span the latency range observed in our dataset). We fix the number of rounds as $T=100$ and conduct 10 independent runs for each experiments, plotting mean and standard deviation.
These metrics reflect the efficiency of each learning process: regret measures cost difference from the optimal solution, while constraint violation captures SLA violation relative to optimal. While the baselines (e.g., ETC with $\alpha=0.2$) have a lower regret for various latency SLAs, their
corresponding constraint violation for those latencies are significantly higher. \system, on the other hand, has lower regret (indicating its cost usage is closer to optimum) while ensuring the lowest constraint violation. This consequently ensures that \system can find a decomposition such that the E2E latency 
for a given is satisfied (Fig.~\ref{fig:trace_lat}).
On the other hand, the baseline ETC algorithms, which rely on only $\alpha T$ samples to construct the surrogate model, is mostly flat (Fig.~\ref{fig:trace_lat}),
indicating its failure to accurately learn the unknown (black-box) functions.

\begin{figure*}[t]   
  \centering           
  \subfloat[Regret]
  { \label{fig:syn_reg_low_short}\includegraphics[width=0.25\textwidth]{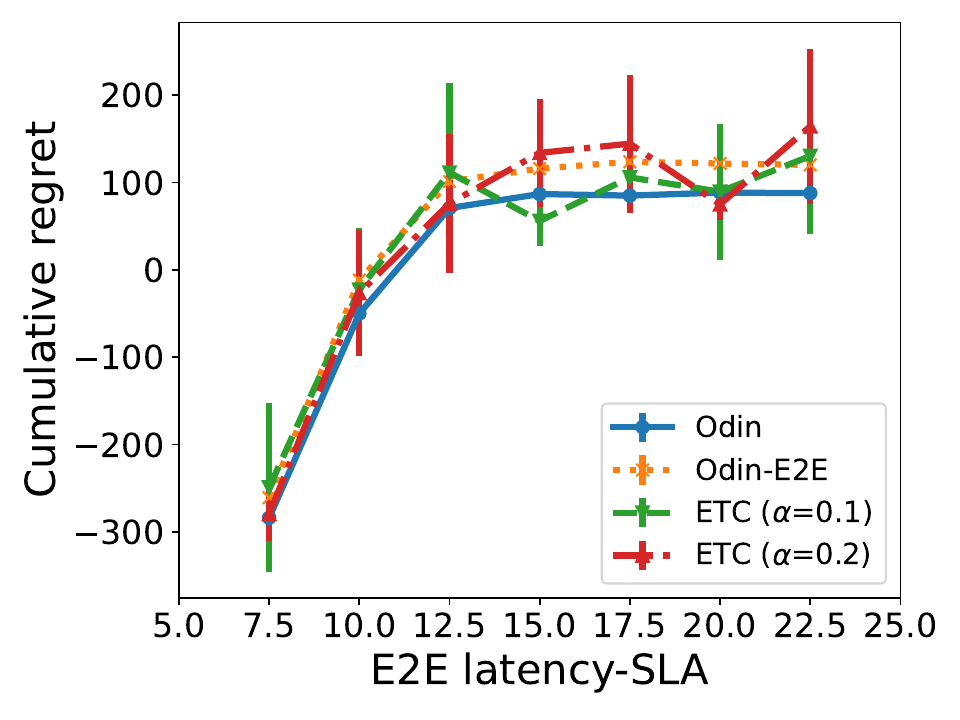}}
  \subfloat[Constraint violation]
{\label{fig:syn_con_vio_low_short}\includegraphics[width=0.25\textwidth]{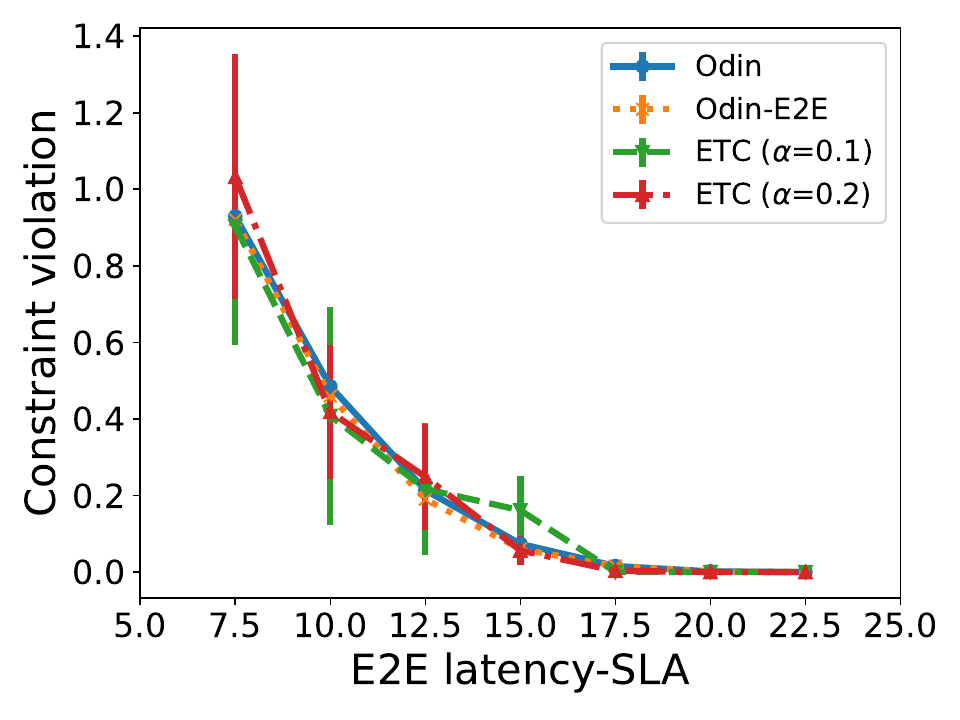}}
  \subfloat[No. (\%) rounds with zero violation] 
  %\subfloat[E2E latency satisfaction rate] 
  {\label{fig:syn_satis_low_short}\includegraphics[width=0.25\textwidth]{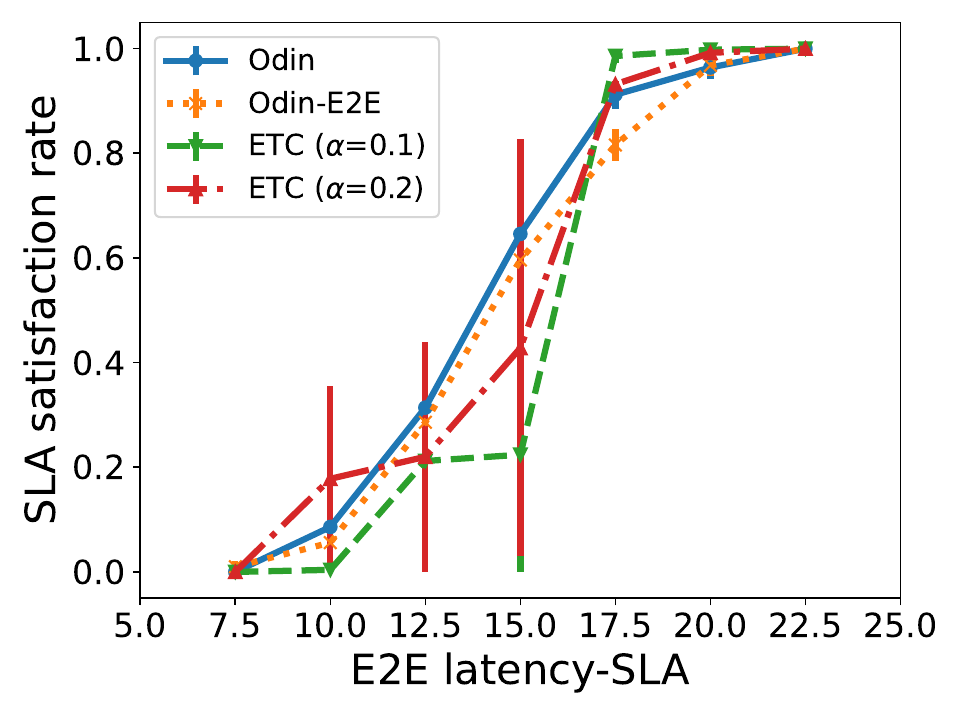}
  }
  \subfloat[E2E latency performance] 
  {\label{fig:syn_avg_lat_low_short}\includegraphics[width=0.25\textwidth]{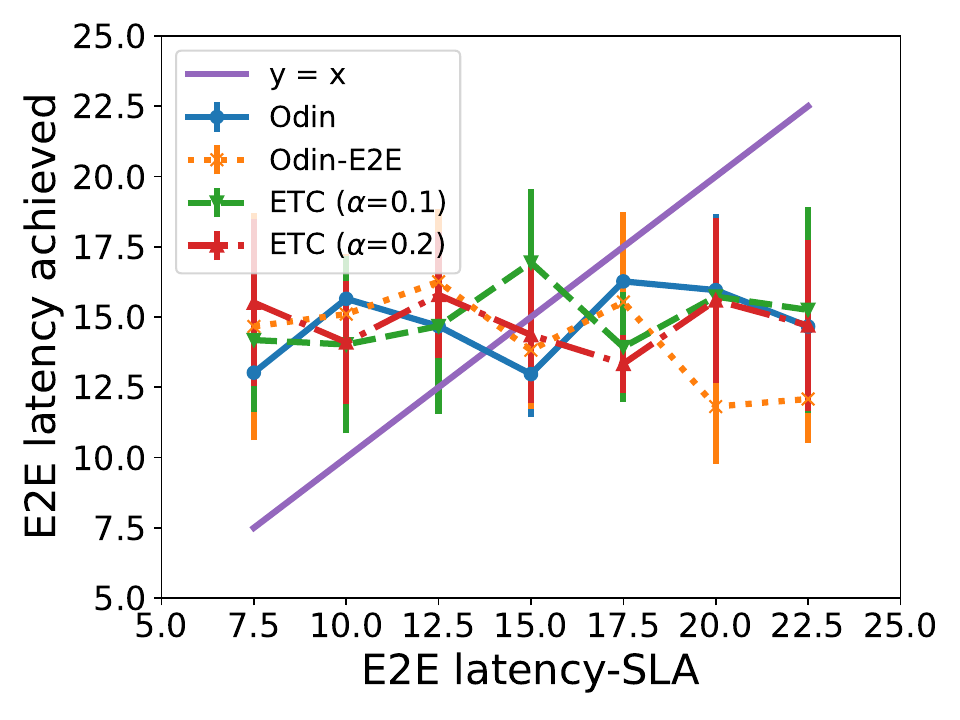}
  }
  \caption{Simulation results for $T=100$ rounds and noise level $R=5\%$} 
  \label{fig:syn_plot_low_short}           
\end{figure*}

\begin{figure*}[t]   
  \centering           
  \subfloat[Regret]
  { \label{fig:syn_reg_high_short}\includegraphics[width=0.25\textwidth]{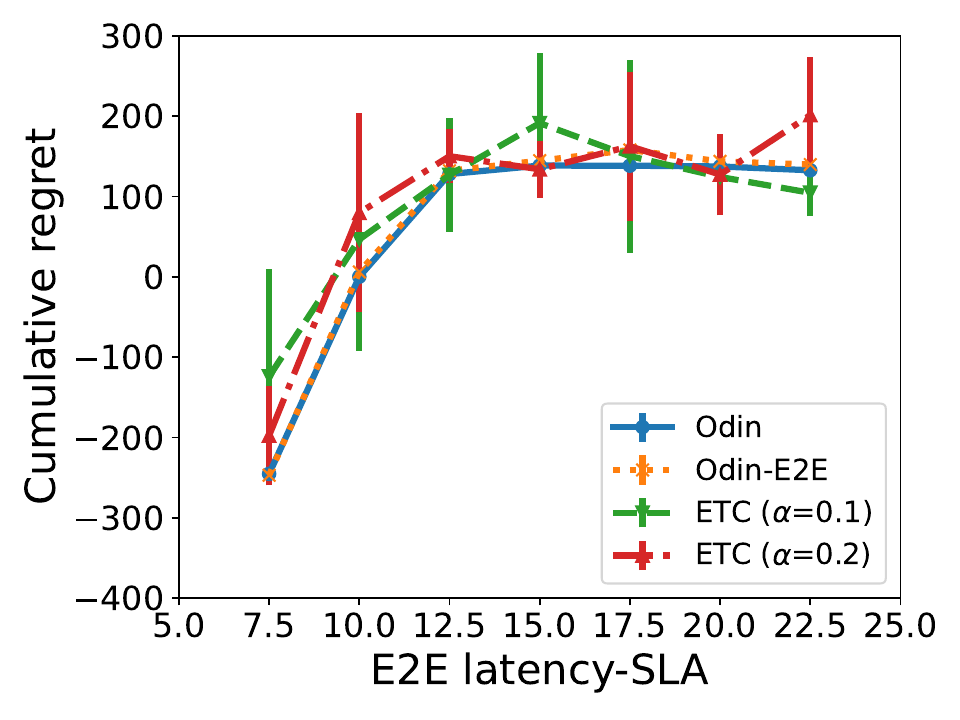}}
  \subfloat[Constraint violation]
{\label{fig:syn_con_vio_high_short}\includegraphics[width=0.25\textwidth]{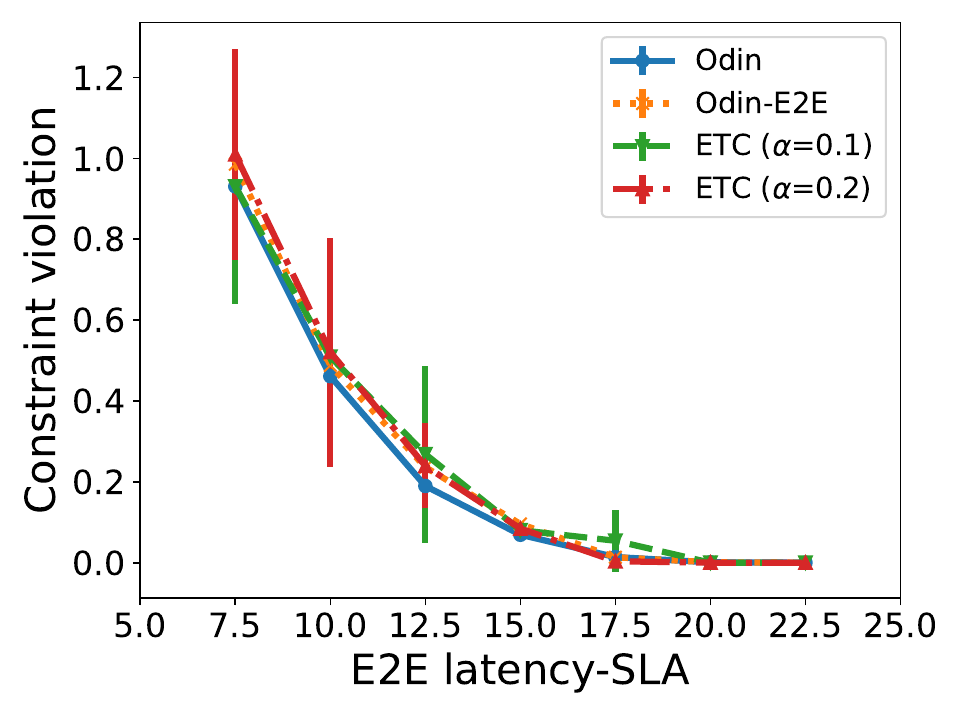}
  }
  \subfloat[No. (\%) rounds with zero violation] 
  {
      \label{fig:syn_satis_high_short}\includegraphics[width=0.25\textwidth]{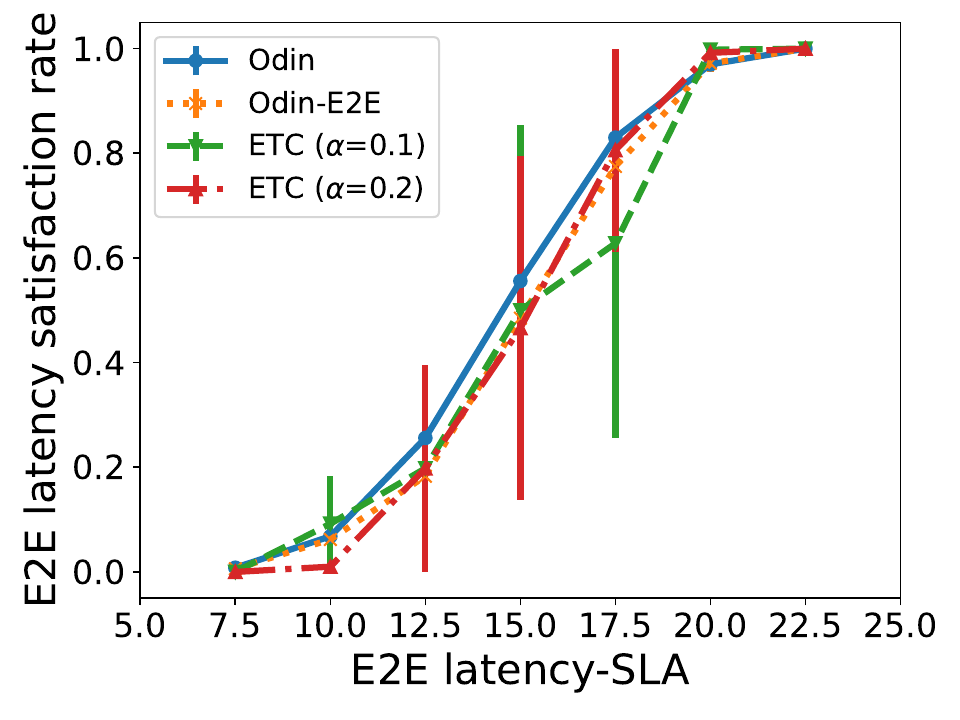}
  }
  \subfloat[E2E latency performance] 
  {\label{fig:syn_avg_lat_high_short}\includegraphics[width=0.25\textwidth]{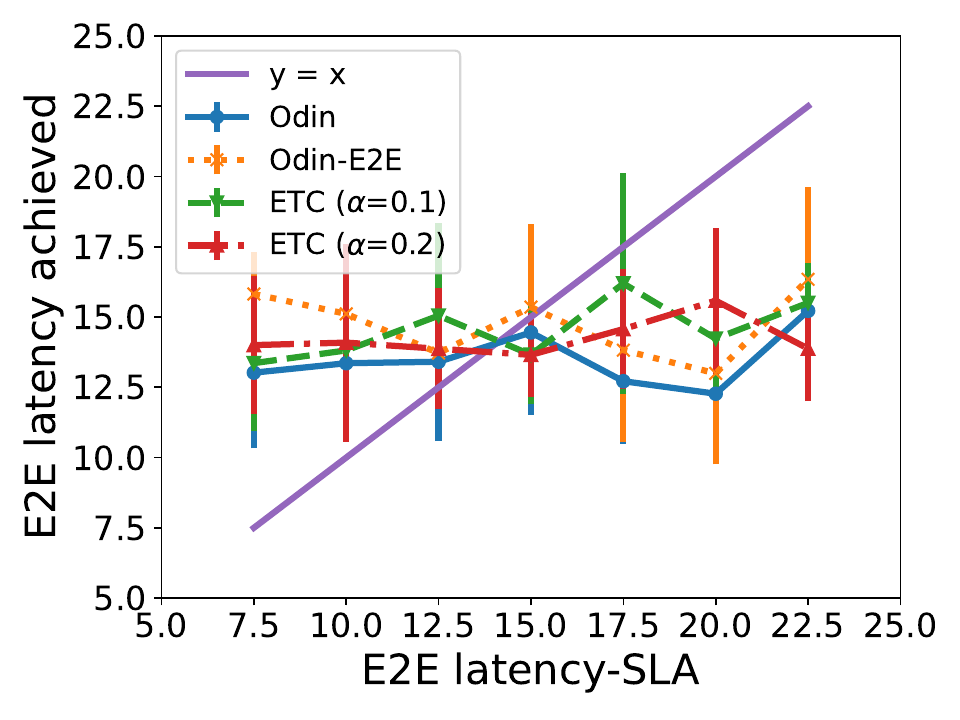}
  }
  \caption{Simulation results for $T=100$ rounds and noise level $R=50\%$} 
  \label{fig:syn_plot_high_short}           
\end{figure*}

\begin{figure*}[t]   
  \centering           
  \subfloat[Regret]
  { \label{fig:syn_reg_low_long}\includegraphics[width=0.25\textwidth]{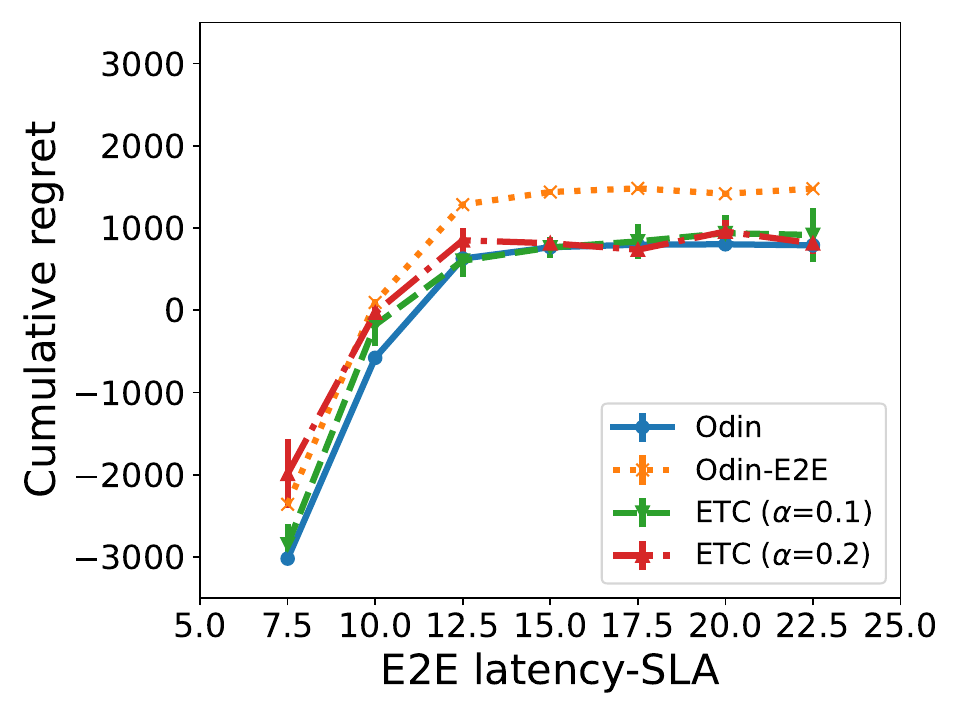}
      }
  \centering           
  \subfloat[Constraint violation]
{\label{fig:syn_con_vio_low_long}\includegraphics[width=0.25\textwidth]{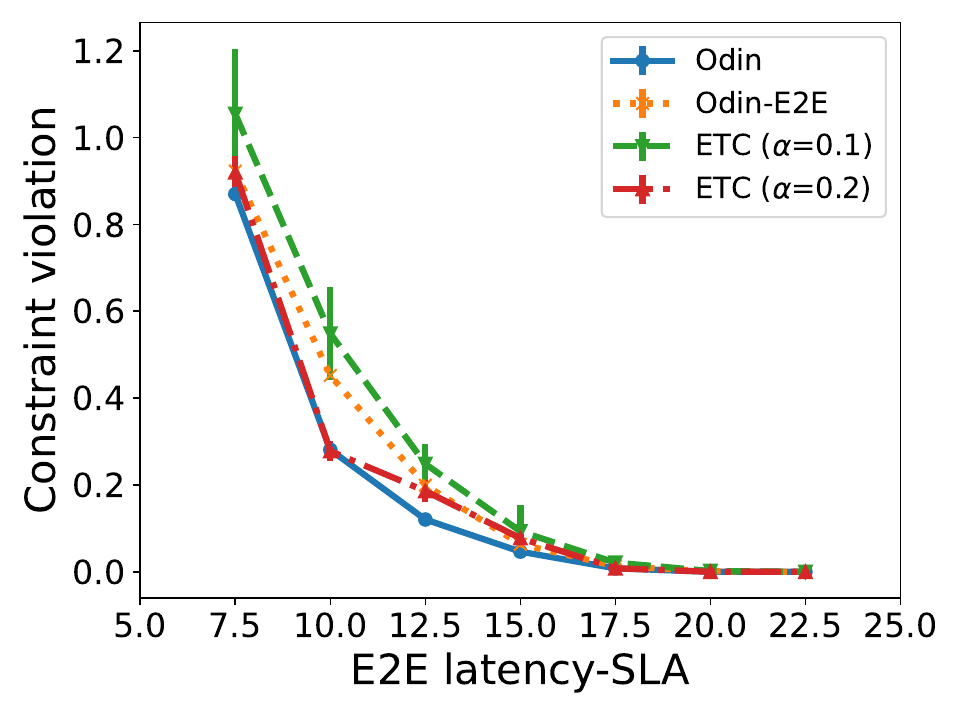}}
  \centering           
  \subfloat[No. (\%) rounds with zero violation] 
{\label{fig:syn_satis_low_long}\includegraphics[width=0.25\textwidth]{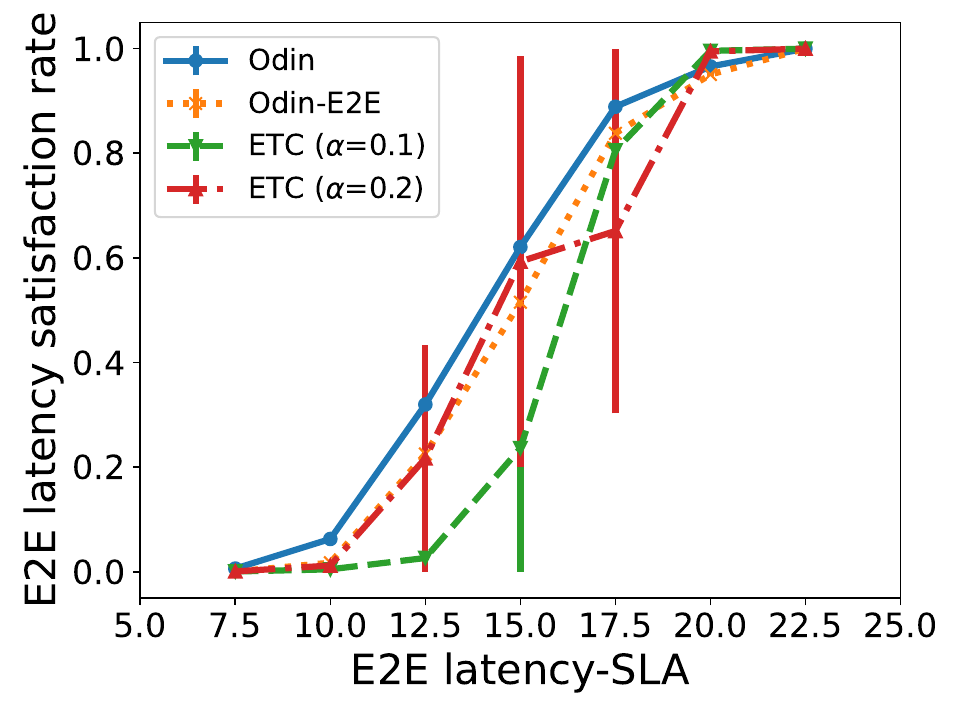}
  }
  \centering           
  \subfloat[E2E latency performance] 
  {\label{fig:syn_avg_lat_low_long}\includegraphics[width=0.25\textwidth]{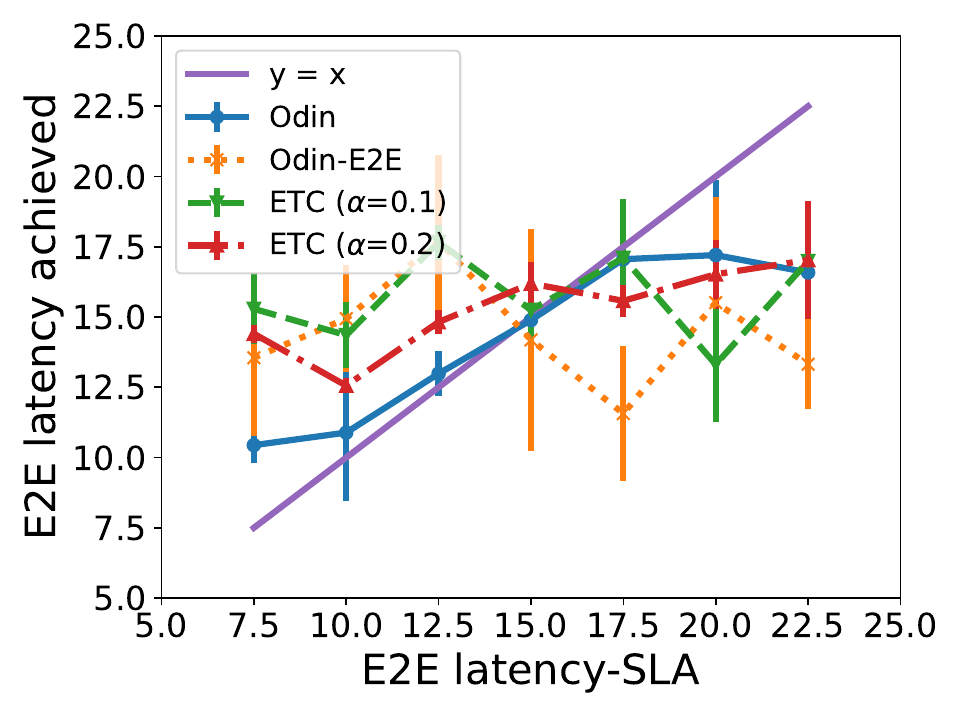}
  }
  \caption{Simulation results for $T=1,000$ rounds and noise level $R=5\%$} 
  \label{fig:syn_plot_low_long}           
\end{figure*}   
\begin{figure*}[t]   
  \centering           
  \subfloat[Regret]
  { \label{fig:syn_reg_high_long}\includegraphics[width=0.25\textwidth]{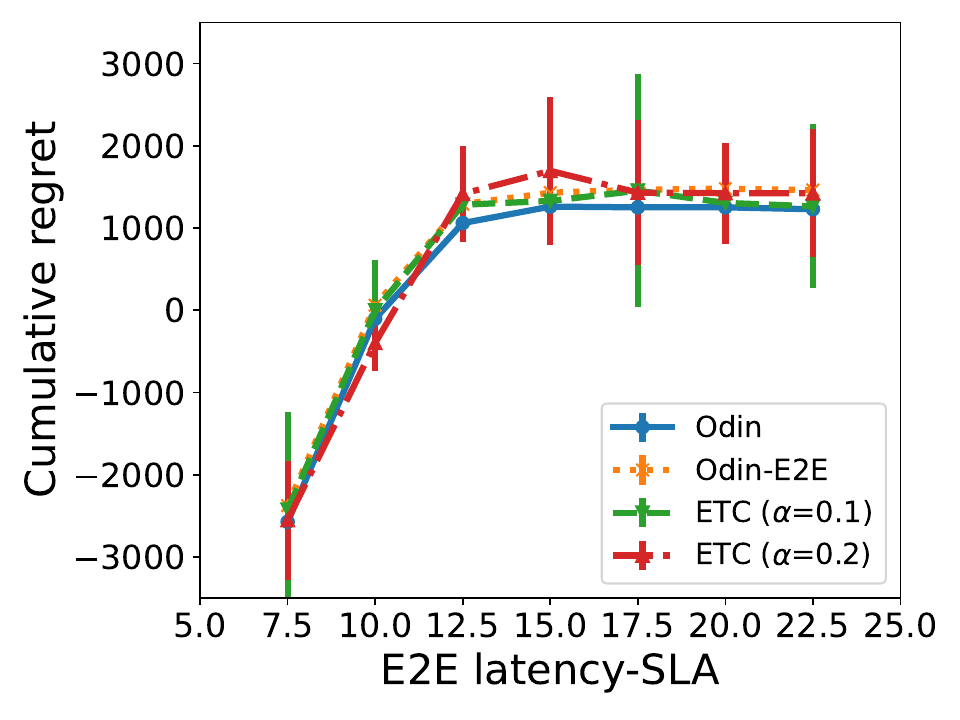}
      }
  \subfloat[Constraint violation]{\label{fig:syn_con_vio_high_long}\includegraphics[width=0.25\textwidth]{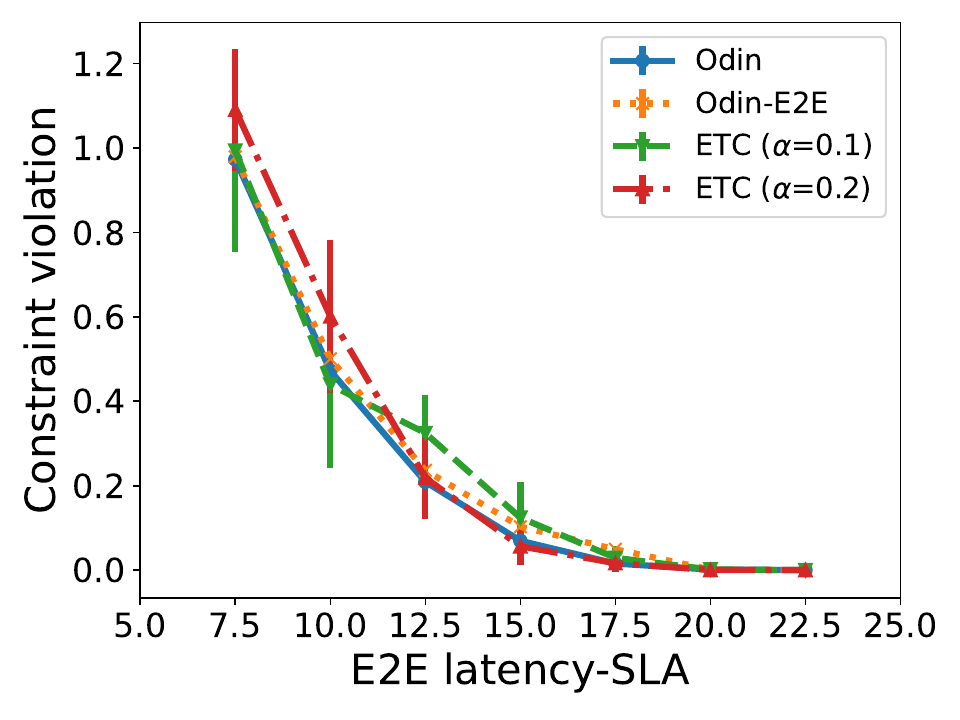}}
  \subfloat[No. (\%) rounds with zero violation] 
  {\label{fig:syn_satis_high_long}\includegraphics[width=0.25\textwidth]{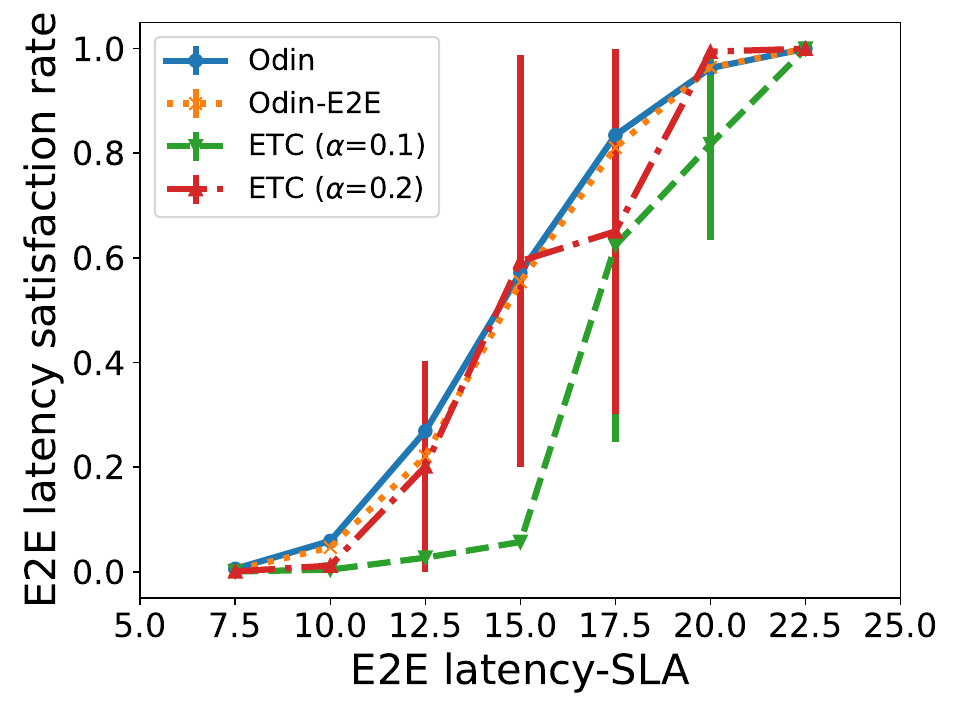}
  }
  \subfloat[E2E latency performance] 
  {\label{fig:syn_avg_lat_high_long}\includegraphics[width=0.25\textwidth]{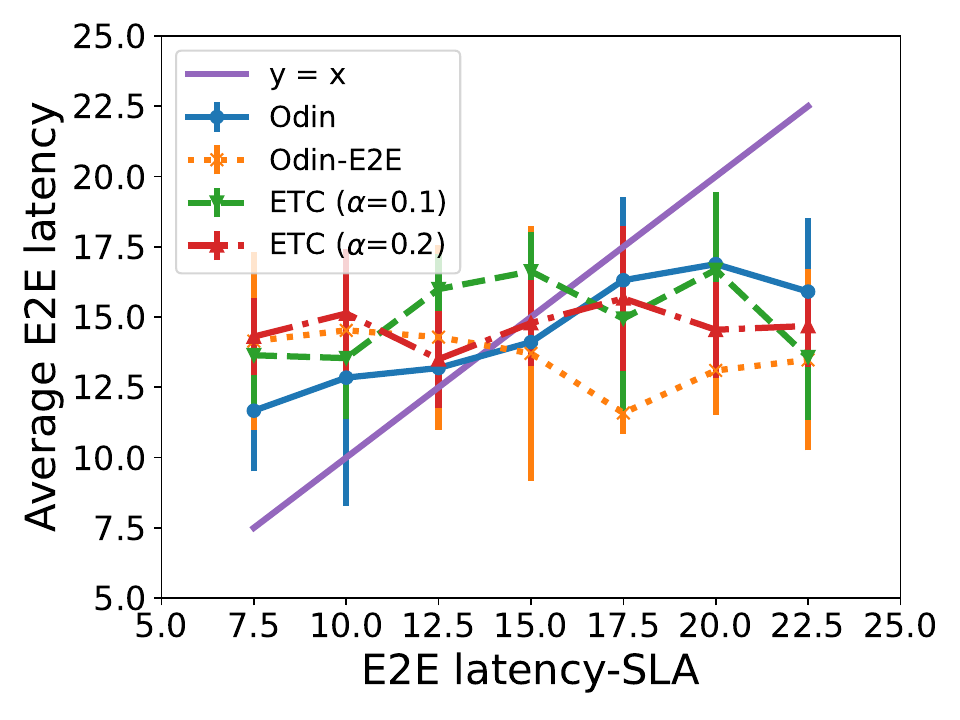}
  }

  \caption{Simulation results for $T=1,000$ rounds and noise level $R=50\%$} 
  \label{fig:syn_plot_high_long}           
\end{figure*}

\subsection{Large-Scale Evaluation - Simulation}
\label{sec:simulations}
We now discuss our simulation results showing \system's efficacy vis-a-vis baseline solutions for disparate complex unknown functions.
To capture heterogeneity across domains, we select diverse functions for each domain. To be specific, 
for the unknown performance (latency) functions ($f^1,f^2,f^3$), we use a logarithmic function of the form $w\log(x)$, exponential 
function of the form $w\exp(x)$, and a quadratic function of the form $wx^2$. Similarly, for cost functions ($g^1, g^2, g^3$), we use a 
rational function of the form $\frac{w}{x+b}$, inversed quadratic function of the form $w/x^2$, and a Gaussian function 
of the form $w\exp(-x^2)$, where $w$ and $b$ are scaling/shifting factors that vary across functions.
In each experiment, the domain function returns a noisy feedback to the
E2E orchestrator, wherein the noise is explicitly added as Gaussian noise. We show the results for two noise levels $R$: $R=5\%$ and $50\%$
of the function's maximum value.

We conduct experiments for various E2E latency target values, i.e., $L\in\{7.5, 10.0,\dots,22.5\}$, which covers the entire range of the achievable E2E latency.
In each experiment,
we conduct 10 independent runs, and for two values of $T=100$ and $1000$, in each run. We plot the 
cumulative regret and 
the constraint violation. The plots in Figs.~\ref{fig:syn_plot_low_short}-\ref{fig:syn_plot_high_long}, show the regret and constraint violation averaged over the 10 runs along with their 
standard deviation. The graphs also show the E2E latency satisfaction rate (i.e., the percentage of total rounds in which the E2E latency is not violated), which is calculated as $\sum_{t=1}^T 1\{\sum_{i=1}^D f^i(x_t^i)\leq L\}/T$ (where $1\{\cdot\}$ is the indicator function), and the $T$-round average E2E latency achieved for each SLA target.

Comparing the results between the two noise levels, 5\% and 50\%, we see that all the solutions, as expected, perform worse when the  noise level is higher. However, for a given
number of rounds $T$, when noise level is at 50\%, \system, in comparison to other baselines, enjoys the lowest regret and the constraint violation,
indicating that it is more robust in the presence of noisy feedback and fewer sampling rounds. Consequently, this results in an increased SLA satisfaction (i.e., more rounds in which the solution has a configuration with 100\% SLA satisfaction), and 
the average E2E latency performance improvement. E.g., for noise level $R= 5\%$, $T=100$, and E2E SLA requirement of 7.5~ms, \system's performance improvement over
the baselines \system-E2E, ETC-0.1, ETC-0.2 is 25\%, 20\%, and 40\%, respectively. However, when the number of observations is increased to $T=1000$, for 
the same SLA = 7.5~ms,
\system improves the E2E latency achieved by 46\%, 60\% and 26.6\%,
compared to the baselines \system-E2E, ETC-0.1, ETC-0.2, respectively. 
When noise level is at 50\%, and $T=100$, the performance improvement for \system over the baselines for SLA = 7.5~ms is 45\%, 5.3\%, and 12\%, respectively. When $T=1000$, even with 50\% noise levels, and for SLA = 7.5~ms, the performance of \system improves by baselines by 26\%, 21\%, and 26.5\%, respectively.
Indeed, for higher SLAs $\geq$ 15~ms, we see that all the algorithms cannot achieve
the latency as demanded, with baselines in some cases achieving latency below \system's, but this is an artifact of \system's optimization which prioritizes lowering 
the cost (resources) as long as the SLA is satisfied, while the other baselines do not necessarily focus on saving costs. In summary, we see \system effectively
balance between performance and cost, while achieving a latency decomposition that can satisfy the SLA. 

%\vspace{-1.4ex}
\section{Conclusion}
\label{sec:conclusion}
This paper presents \system, a novel Bayesian Optimization-based solution for online E2E SLA decomposition in 5G/6G network slicing. 
\system decomposes a given SLA, assigning specific targets to each domain whilst not needing to know apriori their conditions/control policies. \system's design is fully
compliant with 3GPP standards, UNEXT-native, and can be deployed on existing 5G networks. Our theoretical analysis and trace-based evaluations demonstrate the efficacy and performance gains of \system over baseline solutions. 
%\vspace{-1.4ex}
\begin{acks}
This work was supported in part by the National Science Foundation under Grants CNS-2312833, CNS-2312835, and CNS-2153220.
Part of this work was done during Duo Cheng's internship at Nokia Bell Labs.
\end{acks}

\bibliographystyle{ACM-Reference-Format}
\bibliography{sample-base.bib}

%%
%% If your work has an appendix, this is the place to put it.

\end{document}